\documentclass[12pt]{iopart}

\usepackage{amsthm}
\usepackage{nicefrac}
\usepackage{iopams}  
\usepackage{graphics}

\newcommand{\ZZ}{\mathbb Z}

\newcommand{\CC}{\mathbb C}        
\newcommand{\TT}{\mathbb T}       
\newcommand{\OO}{\mathbb O}			
\newcommand{\YY}{\mathbb Y}
\newcommand{\DD}{\mathbb D}

\newcommand{\mf}[1]{\mathfrak{#1}}
\newcommand{\GL}{\textrm{GL}}
\newcommand{\bigslant}[2]{{\raisebox{.2em}{$#1$}\left/\raisebox{-.2em}{$#2$}\right.}}

\newcommand\zn[1]{\bigslant{\ZZ}{#1}}  

\newtheorem{thm}{Theorem}[section]
\newtheorem{lem}[thm]{Lemma}
\newtheorem{cor}[thm]{Corollary}

\newtheorem{example}[thm]{Example}
\newtheorem{definition}[thm]{Definition}
\newtheorem{prop}[thm]{Proposition}

\newenvironment{mymatrix}{\left(\!\!\begin{array}{*{20}{c}}}{\end{array}\!\!\right)}

\begin{document}

\title{Automorphic Lie Algebras with dihedral symmetry}

\author{V Knibbeler$^1$, S Lombardo$^1$ and J A Sanders$^2$}

\address{1 Department of Mathematics and Information Sciences, Northumbria University, UK}
\address{2 Department of Mathematics, Vrije Universiteit Amsterdam, The Netherlands}
\eads{\mailto{vknibbeler@gmail.com}
}

\begin{abstract}
The concept of \emph{Automorphic Lie Algebras} 
arises in the context of reduction groups introduced in the early 1980s in the field of integrable systems. 
Automorphic Lie Algebras are obtained by imposing a discrete group symmetry on a current algebra of Krichever-Novikov type. 
Past work shows remarkable uniformity between  algebras 
associated to different reduction groups. 
For example, if the base Lie algebra is $\mathfrak{sl}_2(\mathbb{C})$ and the poles of the Automorphic Lie Algebra are restricted to an exceptional orbit of the symmetry group, changing the reduction group does not affect the Lie algebra structure.
In the present research we fix the reduction group to be the dihedral group and vary the orbit of poles as well as the group action on the base Lie algebra. We find a uniform description of Automorphic Lie Algebras with dihedral symmetry, valid for poles at exceptional and generic orbits.

\end{abstract}

\pacs{02.20.Sv, 02.30.Ik}
\vspace{2pc}
\noindent {\it Keywords}: Automorphic Lie Algebras, Classical Invariant Theory, Dihedral Symmetry
\maketitle

\section{Introduction}
Automorphic Lie Algebras were introduced in the context of the classification of integrable partial differential equations \cite{Lombardo,LM04iop,LM05comm}. In fact, the Zakharov-Shabat / Ablowitz-Kaup-Newell-Segur scheme, 
used to integrate these equations, requires a pair of matrix valued functions of the spectral parameter $\lambda \in\overline{\CC}$ living on the Riemann sphere. Since a general pair of such $\lambda$-dependent matrices gives rise to an under determined system of differential equations, one requires additional constraints. By a well established scheme introduced by Mikhailov \cite{Mikhailov81}, and further developed in \cite{LM05comm}, this can be achieved by imposing a group symmetry on the matrices. Algebras consisting of all such symmetric matrices are called \emph{Automorphic Lie Algebras}, by analogy with automorphic functions. Since their introduction they have been extensively studied (see, \cite{LS10} and references therein, but also \cite{Bury2010,Chopp2011}). 
\begin{definition}[Automorphic Lie Algebras,  \cite{Lombardo,LM05comm}]
Let $\mathfrak{g}$ be a simple Lie algebra, $\mathcal{M}(\overline{\mathbb{C}})$ the field of meromorphic functions on the Riemann sphere, and $G$ a finite subgroup of $\textrm{Aut}(\mathfrak{g}\otimes\mathcal{M}(\overline{\mathbb{C}}))$; the \emph{Automorphic Lie Algebra} is the space of invariants 
\[\left(\mathfrak{g}\otimes\mathcal{M}(\overline{\mathbb{C}})\right)^G_\Gamma=
\left\{f\in\mf{g}\otimes\mathcal{M}(\overline{\mathbb{C}})\;|\;f\textrm{ analytic on }\overline{\CC}\setminus\Gamma,\; gf=f \;\forall g\in G\right\}\] where $\Gamma\subset\overline{\mathbb{C}}$ is a single $G$-orbit where poles are allowed.
\end{definition}
Algebras of meromorphic functions, taking values in a finite dimensional Lie algebra and having their poles restricted to a finite set, are \emph{current algebras of Krichever-Novikov type} \cite{schlichenmaier2003higher,schlichenmaier2013virasoro}. 
Automorphic Lie Algebras are subalgebras thereof.  
They are infinite dimensional over the complex numbers but finitely generated over a ring.
The infinite dimensional view dominates in the current literature. In this paper we rather consider the Lie algebras as modules over a polynomial ring (in the invariants, or the modular invariant); in fact, while this adds some computational complications, one is rewarded with classical looking Serre normal form results (see Section \ref{sec:nf}).

The case where the reduction group $G$ is the symmetry group of a polygon, the dihedral group $\DD_N$, has been studied before. Seminal but isolated results were obtained already in \cite{Mikhailov81}; the first attempt towards a systematic classification can be found in \cite{LM04iop}, further development in \cite{LM05comm}, and \cite{LS10}. However, all results so far have been obtained under the simplifying assumption that one can use the same dihedral representation to define an action on either the space of vectors or matrices, and on the polynomials. Here this restriction is lifted.
In the context of integrable systems, dihedral symmetric Automorphic Lie Algebras have been studied the most 
(e.g.  \cite{LM04iop,Mikhailov81,Wa09,konstantinou2013darboux,MPW2014}) and together with icosahedral symmetry, dihedral symmetry seems the most relevant in physical systems. 

In this paper the complete classification of Automorphic Lie Algebras with dihedral symmetry is tackled using classical invariant theory. The Riemann sphere is identified with the complex projective line $\CC P^1$ consisting of quotients $\frac{X}{Y}$ of two complex variables by setting $\lambda=\frac{X}{Y}$. M\"obius transformations on $\lambda$ then correspond to linear transformations on the vector $(X,Y)$ by the same matrix. Classical invariant theory is used to find the $G$-invariant subspaces of $\CC[X,Y]$-modules, where $\CC[X,Y]$ is the ring of polynomials in $X$ and $Y$. These ring-modules of invariants are then localised by a choice of multiplicative set of invariants. This choice corresponds to selecting the orbit $\Gamma$ of poles. The set of elements in the localisation of degree zero, i.e. the set of elements which can be expressed as functions of $\lambda$, generate the Automorphic Lie Algebra.
Once the algebra is identified, it is transformed into a normal form in the spirit of the standard Serre normal form \cite{Humphreys}; we believe this is the most convenient form for analysis.

This investigation leads to our main result:

%

\emph{
Let the dihedral group $\DD_N$ act faithfully on the spectral parameter $\lambda\in\overline{\CC}$ and let $\Gamma\subset\overline{\CC}$ be an orbit thereof.
Let $V$ be an irreducible complex projective representation of $\DD_N$ and $\mf{g}(V)$ a simple Lie subalgebra and $\DD_N$-submodule of $\textrm{End}(V)\cong V\otimes V^\ast$. This leaves only $\mf{sl}(V)$. Then the Automorphic Lie Algebra $(\mf{g}(V)\otimes\mathcal{M}(\overline{\mathbb{C}}))^{\DD_N}_\Gamma$ 
is generated by 
three matrices as a module over the ring of automorphic functions and there is a choice of generators 
$e_0,\;e_+,\;e_-$ in which the bracket relations are 
\[[e_0,e_{\pm}]=\pm 2 e_\pm\qquad[e_+,e_-]=fe_0\] where $f$ is the automorphic function with zeros of lowest possible order at all exceptional orbits not equal to $\Gamma$. To be more specific, $f=\frac{F_0^{\nu_0}F_1^{\nu_1}F_2^{\nu_2}}{F_\Gamma^{3\nu_{\Gamma}}}$ where $\Gamma_i$ denotes an exceptional orbit and $F_i$ respectively $F_\Gamma$ is the form of lowest degree that vanishes on $\Gamma_i$, respectively $\Gamma$, as introduced in Section \ref{sec:pre}. If $\Gamma$ is one of the exceptional orbits the Lie algebra structure can be summarised as in Table \ref{tab:structure constants}.
\begin{table}[h!]
\caption{Lie algebra structure of Automorphic Lie Algebras with dihedral symmetry:
the structure constant $f$ with poles at the exceptional orbits $\Gamma_i$.
}
\begin{center}
\begin{tabular}{c|cccc}
\label{tab:structure constants}
 $\Gamma$ & $\Gamma_0$ & $\Gamma_1$ & $\Gamma_2$ \\
\hline
$f$ & $\frac{F_1^2F_2^N}{F_0^4}$ & $\frac{F_2^NF_0^2}{F_1^4}$ & $\frac{F_0^2F_1^2}{F_2^{2N}}$\\
\end{tabular}
\end{center}
\end{table}
}

In particular, if one evaluates the algebra elements at $\lambda$ then $(\mathfrak{g}(V)\otimes\mathcal{M}(\overline{\mathbb{C}}))^{\DD_N}_\Gamma(\lambda)\cong \mf{sl}_2(\CC)$ if and only if the orbit $\DD_N\lambda$ is generic and not equal to $\Gamma$.
The result highlighted above (see Theorem \ref {thm:main2} in Section \ref{sec:nf}) allows us to provide a complete classification of $\mathfrak{sl}_2(\CC)$-based Automorphic Lie Algebras with dihedral symmetry, which is an important step towards the complete classification of Automorphic Lie Algebras, as the simplifying assumption made in \cite{LS10} to use the same matrices representing the reduction group to act on the spectral parameter as well as on the base Lie algebra is dropped. This simplifying assumption is no longer valid when considering higher dimensional Lie algebras.
Moreover, there is no need to distinguish between \emph{generic} and \emph{exceptional} orbits, as we establish that they can be all treated in the same way, in contrast to what was previously generally accepted (\cite{LM05comm,Bury2010}), even if the corresponding algebras are non isomorphic.

This paper is organised as follows: Section \ref{sec:pre} summarises definitions and basic notions used in the rest of the paper. Here notation is also fixed. Section \ref{sec:tALIAS} illustrates our approach to the classification of Automorphic Lie Algebras, while Section \ref{sec:D_NALIAS} presents the results for the $\DD_N$ case. In particular, in \ref{sec:nf} the Serre normal forms are presented and in \ref{sec:moi} the concept of \emph{matrices of invariants} is introduced. We believe that the latter will play a fundamental role in the classification of higher dimensional Automorphic Lie Algebras.
A section of Conclusions and an Appendix complete the paper.

\section{Preliminaries}
\label{sec:pre}
In this section, for the benefit of the reader, we recall definitions and basic notions from the theory of representations of finite groups, notably $\DD_N$ representation theory, and from the classical theory of invariants which will be used in what follows. For representation theory of finite groups we refer the reader to \cite{Serre}, among many other books. The results from invariant theory are discussed in \cite{stanley1979invariants}.

\subsection{Representations of the dihedral group $\DD_N$}
The dihedral group of order $2N$ can be defined abstractly by
\[\DD_N=\langle r, s\;|\;r^N=s^2=(rs)^2=1 \rangle.\]
If one thinks of the representation as symmetries of a regular $N$-gon, then $r$ stands for rotation and $s$ for reflection. The group consists of $N$ rotations $r^\ell$ and $N$ reflections $sr^\ell$, $0\le \ell <N$.
There are significant differences between the cases where $N$ is odd or even. For instance, the square of a rotation of order $N$ has again order $N$ when $N$ is odd, while it has order $\nicefrac{N}{2}$ when $N$ is even.
If $N$ is odd then $\DD_N$ has two one-dimensional characters, $\chi_1$ and $\chi_2$. If $N$ is even there are two additional one-dimensional characters, $\chi_3$ and $\chi_4$. The values of $\chi_1,\ldots,\chi_4$ are given in Table \ref{tab:one-dimensional characters}.
\begin{table}[h!]
\caption{$\DD_N$ one-dimensional characters: $\chi_1$, $\chi_2$ if $N$ is odd; $\chi_1,\ldots,\chi_4$ if $N$ is even.}
\label{tab:one-dimensional characters}
\begin{center}
\begin{tabular}{c|cccc}
 & $\chi_1$ & $\chi_2$ & $\chi_3$ &$\chi_4$\\
\hline
$r$ & $1$ & $1$ & $-1$ & $-1$\\
$s$ & $1$ & $-1$ & $1$ & $-1$
\end{tabular}
\end{center}
\end{table}

The remaining irreducible characters are two-dimensional. We denote them by $\psi_j$, for $1\le j <\nicefrac{N}{2}$, and they take the values
\[
\psi_j(r^\ell)=\omega^{j\ell}+\omega^{-j\ell},\qquad \psi_j(sr^\ell)=0\,,
\]
where $\omega= e^{\frac{2\pi i}{N}}.$
The characters of the faithful (injective) representations are precisely those $\psi_j$ for which $\textrm{gcd}(j,N)=1$, where $\textrm{gcd}$ stands for \textit{greatest common divisor}. 

When we consider explicit matrices for a representation $\rho:\DD_N\rightarrow \GL(\CC^2)$ with character $\psi_j$, there is an entire equivalence class to choose from. We will consider here the choice
\begin{equation}
\label{eq:standard matrices}
 \rho_r=\begin{mymatrix}\omega^{{j}}&0\\0&\omega^{N-{j}}\end{mymatrix} \,,\qquad \rho_s=\begin{mymatrix}0&1\\1&0\end{mymatrix}\,.
\end{equation}
If one knows the space of invariants $V^G$, when $G$ is represented by $\rho$, then one can easily find the space of invariants belonging to an equivalent representation $\rho'$, because the invertible transformation $T$ that relates the two representations, $T\rho_g=\rho'_g T$, also relates the spaces of invariants: $V^{\rho'(G)}=TV^{\rho(G)}$.

\subsection{Elements of invariant theory}

\begin{definition}
Let $V$ be a vector space and let $\CC[X,Y]$ be the ring of polynomials in $X$ and $Y$; then we define 
\[
\widehat{V}= V\otimes\CC[X,Y]\,.
\]
\end{definition}
We denote, here and in what follows, $\widehat{V}^G$ the space of $G$-invariants in $\widehat{V}$.
An important classical result in invariant theory is Molien's Theorem (see for example \cite{stanley1979invariants,Smith,Neusel}). 
\begin{thm}[Molien]
\label{Molien, thm}
Let $V$ and $U$ be $G$-modules. Then the Poincar\'e series 
\footnote{There is no unanimous consensus regarding the name of the generating function $P\big(M, t\big)$ of a graded module $M$ over a graded algebra: in e.g. \cite{stanley1979invariants}
the term \emph{Hilbert series} is used, while
in e.g. \cite{Smith,Neusel}
 it is called \emph{Poincar\'e series} and we prefer to follow this last convention. 
It is worth noticing that in the context of classical invariant theory, if $M=(V\otimes\CC[U])^G$, it is also called \emph{Molien series}.
}   
for the space of invariants in $V\otimes\CC[U]$ is given by \[P\big((V\otimes\CC[U])^G, t\big)=\frac{1}{|G|}\sum_{g\in G}\frac{\overline{\chi_V(g)}}{\det(1-\sigma_U(g)t)}.\] 
where $\chi_V$ is the character of $V$ and $\sigma_U$ the representation of $U$.
\end{thm}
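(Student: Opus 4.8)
The plan is to compute the Poincar\'e series degree by degree, reducing in each graded piece to the classical averaged-character count for the dimension of invariants, and then summing the resulting generating function in closed form.

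First I would fix the grading. The polynomial ring $\CC[U]$ is graded by degree, with degree-$d$ piece $\CC[U]_d=\textrm{Sym}^d(U^\ast)$, so that $V\otimes\CC[U]=\bigoplus_{d\ge 0}V\otimes\CC[U]_d$ with each summand finite dimensional and $G$-stable. Because the $G$-action preserves polynomial degree, taking invariants commutes with the direct sum, giving $(V\otimes\CC[U])^G=\bigoplus_{d\ge0}(V\otimes\CC[U]_d)^G$. Hence, by definition of the Poincar\'e series,
\[
P\big((V\otimes\CC[U])^G,t\big)=\sum_{d\ge0}\dim\big(V\otimes\CC[U]_d\big)^G\,t^d,
\]
a well-defined formal power series since each graded component is finite dimensional.

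Next I would evaluate each dimension via the averaging (Reynolds) operator. For any finite-dimensional $G$-module $W$ the map $\tfrac{1}{|G|}\sum_{g\in G}g$ is idempotent with image $W^G$, so its trace equals $\dim W^G$; taking traces termwise yields $\dim W^G=\tfrac{1}{|G|}\sum_{g\in G}\chi_W(g)$. Applying this to $W=V\otimes\CC[U]_d$ and using multiplicativity of characters under tensor products gives $\chi_W(g)=\chi_V(g)\,\chi_{\textrm{Sym}^d(U^\ast)}(g)$. The key computation is then the symmetric-power generating function: since every $g\in G$ has finite order, $\sigma_U(g)$ is semisimple, and writing its eigenvalues as $\mu_1,\dots,\mu_n$, the eigenvalues on $U^\ast$ are $\mu_i^{-1}$, so $\chi_{\textrm{Sym}^d(U^\ast)}(g)=h_d(\mu_1^{-1},\dots,\mu_n^{-1})$ is the complete homogeneous symmetric polynomial. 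Summing the identity $\sum_{d\ge0}h_d(\mu_1^{-1},\dots,\mu_n^{-1})\,t^d=\prod_{i=1}^n(1-\mu_i^{-1}t)^{-1}=\det\big(1-\sigma_U(g^{-1})t\big)^{-1}$ and interchanging the (locally finite) sums over $d$ and $g$ gives
\[
P\big((V\otimes\CC[U])^G,t\big)=\frac{1}{|G|}\sum_{g\in G}\frac{\chi_V(g)}{\det\big(1-\sigma_U(g^{-1})t\big)}.
\]
Finally I would reindex the sum by $g\mapsto g^{-1}$ and use $\chi_V(g^{-1})=\overline{\chi_V(g)}$, valid for representations of a finite group over $\CC$, to recover the stated form with $\overline{\chi_V(g)}$ in the numerator and $\det(1-\sigma_U(g)t)$ in the denominator.

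The point requiring most care is the bookkeeping of duals and conjugates: polynomials on $U$ form $\textrm{Sym}(U^\ast)$, so the geometric sum naturally produces $\det\big(1-\sigma_U(g^{-1})t\big)$, and it is precisely the reindexing $g\mapsto g^{-1}$ that converts $\chi_V$ into $\overline{\chi_V}$ while matching the denominator to $\sigma_U(g)$. One should also confirm that the termwise trace and the interchange of summations are legitimate; both follow from finite-dimensionality of each graded component, so the identity holds at the level of formal power series and, equivalently, as an analytic identity for $|t|$ sufficiently small.
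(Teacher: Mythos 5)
Your proof is correct, and there is nothing in the paper to compare it against: the paper states Molien's theorem as a classical result and cites the literature (Stanley, Smith, Neusel) rather than proving it. Your argument is the standard proof found in those references: decompose $(V\otimes\CC[U])^G$ by polynomial degree, compute $\dim W^G$ as the trace of the Reynolds operator $\frac{1}{|G|}\sum_{g\in G}g$, identify $\chi_{\textrm{Sym}^d(U^\ast)}(g)$ with the complete homogeneous symmetric function $h_d(\mu_1^{-1},\dots,\mu_n^{-1})$ of the inverse eigenvalues, sum the geometric series to obtain $\det\big(1-\sigma_U(g^{-1})t\big)^{-1}$, and reindex $g\mapsto g^{-1}$ using $\chi_V(g^{-1})=\overline{\chi_V(g)}$. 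Your bookkeeping of the dual (polynomials on $U$ form $\textrm{Sym}(U^\ast)$) and the final reindexing are handled correctly and are consistent with the action $F\mapsto F\big(\sigma_{g^{-1}}(X,Y)\big)$ the paper uses; the interchange of sums is indeed harmless at the level of formal power series since each graded piece is finite dimensional.
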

In this paper $U$ will always be a two-dimensional module with dual basis $\{X,Y\}$.
\begin{example}
[see \cite{Smith}, p.93]
\label{ex:invariant forms}
Consider the cyclic group $\zn{N}$ generated by the matrix $\begin{mymatrix}\omega&0\\0&\omega^{N-1}\end{mymatrix}$ where $\omega$ is a primitive $N$-th root of unity. By direct investigation one finds that $\CC[X,Y]^{\zn{N}}=\bigoplus_{i=0}^{N-1}\CC[X^N,Y^N](XY)^i$ and thus \[P(\CC[X,Y]^{\zn{N}},t)=\frac{\sum_{i=0}^{N-1}t^{2i}}{(1-t^N)^2}=\frac{1-t^{2N}}{(1-t^2)(1-t^N)^2}=\frac{1+t^{N}}{(1-t^2)(1-t^N)}.\]
With this in mind one can determine the invariants for the group $\DD_N$ generated by $\begin{mymatrix}\omega&0\\0&\omega^{N-1}\end{mymatrix}$ and $\begin{mymatrix}0&1\\1&0\end{mymatrix}$,
using Molien's Theorem. Indeed, it follows that
\begin{eqnarray*}
\fl
P(\CC[X,Y]^{\DD_N},t)&=&\frac{1}{2N}\sum_{g\in\DD_N}\frac{1}{\det(1-\sigma_gt)}\\
&=&\frac{1}{2N}\sum_{\ell=0}^{N-1}\left(\left|\begin{array}{cc}1-\omega^\ell t&0\\0&1-\omega^{N-\ell}t\end{array}\right|^{-1}+\left|\begin{array}{cc}1&-\omega^\ell t\\-\omega^{N-\ell}t&1\end{array}\right|^{-1}\right)\\
&=&\frac{1}{2}\left(P(\CC[X,Y]^{\zn{N}},t)+\frac{1}{(1-t^2)}\right)
=\frac{1}{(1-t^2)(1-t^N)}\,.
\end{eqnarray*}
If one can find two algebraically independent $\DD_N$-invariant forms, one of degree $2$ and one of degree $N$, e.g. $XY$ and $X^N+Y^N$, then the above calculation proves that any $\DD_N$-invariant polynomial is a polynomial in these two forms, i.e. $\CC[X,Y]^{\DD_N}=\CC[XY, X^N+Y^N]$.
\end{example}




The $\DD_N$-invariant spaces are known; they are listed in Table \ref{tb:invariants, N odd} and \ref{tb:invariants, N even}, where we use the definitions
\begin{equation}
\label{eq:Fs}
F_0= \frac{X^N-Y^N}{2}\qquad F_1= \frac{X^N+Y^N}{2}\qquad F_2= XY\,,
\end{equation} 
and where $V_\chi$ is the representation space of the character $\chi$.
These forms satisfy one algebraic relation:
\begin{equation}
\label{eq:abg relation}
F_0^2-F_1^2+F_2^N=0\,.
\end{equation}
The appendix contains an explicit calculation for the two-dimensional cases $\widehat{V}_{\psi_j}^{\DD_N}$, for completeness.

\begin{table}[h!]
\caption{Module generators $\eta_i$ in $\widehat{V}_\chi^{\DD_N}=\bigoplus_i\CC[F_1,F_2]\eta_i$, $N$ odd}
\begin{center}
\begin{tabular}{c|ccc}
\label{tb:invariants, N odd}
 & $\chi_1$ & $\chi_2$ & $\psi_j$\\
\hline
$\eta_i$ & $1$ & ${F_0}$ & $\begin{mymatrix}X^j\\Y^j\end{mymatrix},\;\begin{mymatrix}Y^{N-j}\\X^{N-j}\end{mymatrix}$\\
\end{tabular}
\end{center}
\end{table}

\begin{table}[h!]
\caption{Module generators $\eta_i$ in $\widehat{V}_\chi^{\DD_{2N}}=\bigoplus_i\CC[F_1^2,F_2]\eta_i$.}
\begin{center}
\begin{tabular}{c|ccccc}
\label{tb:invariants, N even}
 & $\chi_1$ & $\chi_2$ & $\chi_3$ & $\chi_4$ & $\psi_j$\\
\hline
$\eta_i$ & $1$ & $F_1{F_0}$ & $F_1$ & ${F_0}$ & $\begin{mymatrix}X^j\\Y^j\end{mymatrix},\;\begin{mymatrix}Y^{2N-j}\\X^{2N-j}\end{mymatrix}$\\
\end{tabular}
\end{center}
\end{table}

Let $\chi_V$ denote the character of a $G$-module $V$. Then the character of $\mf{sl}(V)$ is given by
\begin{equation}\label{lem:characters of algebras}
\chi_{\mf{sl}(V)}(g)=\chi_V(g)\overline{\chi_V}(g)-1\qquad \forall g\in G,
\end{equation}
where the overline stands for complex conjugation.
Indeed, $\mf{gl}(V)=\textrm{End}(V)\cong V\otimes V^*$ has character $\chi_V\overline{\chi_V}$, and all scalars $\CC\textrm{Id}$ are invariant.

\begin{example}[$\widehat{\mf{sl}(V_{\psi_j})}^{\DD_N}$]
\label{ex:sl2 invariants}
Using (\ref{lem:characters of algebras}) one finds that 
$\chi_{\mf{sl}(V_{\psi_j})}=\chi_2+\psi_{2j}$. In order to find the invariant matrices explicitly, it is sufficient to identify these two irreducible representations within the space $\mf{sl}(V_{\psi_j})$ generated by $e_0=\begin{mymatrix}1&0\\0&-1\end{mymatrix}$, $e_+=\begin{mymatrix}0&1\\0&0\end{mymatrix}$, and $e_-=\begin{mymatrix}0&0\\1&0\end{mymatrix}$ and then use Table \ref{tb:invariants, N odd} or \ref{tb:invariants, N even}, depending on whether $N$ is odd or even. 

The $\chi_2$ copy in $\chi_{\mf{sl}(V_{\psi_j})}$ is given by $\CC e_0$ and $\{e_+, e_-\}$ is a basis for the $\psi_{2j}$ summand in which $\DD_N$ acts according to our preferred choice of matrices (\ref{eq:standard matrices}). Thus, if e.g. $N$ is odd, the $\DD_N$ invariant algebra $\widehat{\mf{sl}(V_{\psi_j})}^{\DD_N}$ is a free $\CC[F_1, F_2]$-module generated by the matrices
\begin{equation}
\frac{1}{2}\begin{mymatrix} X^N -Y^N& 0 \\ 0 &Y^N-X^N\end{mymatrix}\!,\;
\begin{mymatrix} 0 & X^{2j} \\ Y^{2j} & 0 \end{mymatrix}\!,\; 
\begin{mymatrix} 0 & Y^{N-2j} \\ X^{N-2j} & 0 \end{mymatrix}\!.
\end{equation}
\end{example}

\subsection{Invariant theory on the sphere}

Each noncyclyc finite M\"obius group $G$, such as the dihedral group, is the symmetry group of a tessellation of the sphere with $d_0$ vertices, $d_1$ edges and $d_2$ faces. Let $\Gamma_i\subset\overline{\CC}$ be the set of centroids of $i$-dimensional cells of the tessellation, i.e. an exceptional orbit.

\begin{definition}[The form vanishing on an orbit]
\label{def:form of orbit}
For any $G$-orbit $\Gamma$, we define 
\begin{equation} 
\label{eq:fgamma}
F_\Gamma\in\CC_{|\Gamma|}[X,Y]
\end{equation} as the form vanishing on $\Gamma$.  $F_\Gamma$ is unique up to multiplicative constant. 
Moreover, there is a natural number $\nu_\Gamma$ such that 
\begin{equation*}
\nu_\Gamma |\Gamma|=|G|\,.
\end{equation*}
To shorten notation we use $F_i=F_{\Gamma_i}$ and $\nu_i=\nu_{\Gamma_i}$.
\end{definition}
Under a suitable choice of constants $\sum_{i=0}^{2} F_i^{\nu_i}=0$ and there is no other relation between the forms $F_i$.
Notice that for any form $F$, that is, any homogeneous polynomial of degree $d$, the set of zeros is well defined on the Riemann sphere $\CC P^1$ since $F(cX,cY)=c^dF(X,Y)$ for any scalar $c$.

The forms $F_i$ are relative invariants of a \emph{Schur covering group} $G^\flat$ of $G$ \cite{schur1911darstellung,Curtis}, that is, $g\,F_i=\chi(g) F_i$, $\forall g\in G^\flat$ and where $\chi$ is a one-dimensional character of $G^\flat$. In fact, these forms generate the same ring as the relative invariants of any Schur cover $G^\flat$ do, i.e.
 \[
 \bigslant{\CC[F_0,F_1,F_2]}{F_0^{\nu_0}+F_1^{\nu_1}+F_2^{\nu_2}}=\CC[X,Y]^{[G^\flat,G^\flat]},
 \]
where $[G,G]$ denotes the derived subgroup.
\begin{example}
Let the equator of the Riemann sphere be an $N$-gon with a vertex at $1$. Then the corresponding representation of $\DD_N$ by M\"obius transformations corresponds to the representation $\psi_1$ of $\DD_N^\flat$ in the form (\ref{eq:standard matrices}). It follows that the relative invariants are
\[
\CC[X,Y]^{[\DD_N^\flat,\DD_N^\flat]}=\CC[F_1,F_2]\oplus \CC[F_1,F_2]F_0
\]
where $F_0$, $F_1$ and $F_2$ are given in (\ref{eq:Fs}).
\end{example}

\begin{lem}
\label{lem:finuiinvariant}
Let $F_i$ and $\nu_i$ be defined as above. Then $F_i^{\nu_i}$ is invariant:
\[F_i^{\nu_i}\in \CC_{|G|}[X,Y]^{G^\flat}\,.\]
\end{lem}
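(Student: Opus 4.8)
The plan is to exploit the relative invariance of $F_i$ recorded just above. Since $g\,F_i=\chi(g)F_i$ for all $g\in G^\flat$ with $\chi$ a one-dimensional character, raising to the power $\nu_i$ gives $g\,F_i^{\nu_i}=\chi(g)^{\nu_i}F_i^{\nu_i}$, so the whole statement reduces to showing that $\chi^{\nu_i}$ is the trivial character, equivalently that the order of $\chi$ divides $\nu_i$. The degree bookkeeping is consistent with this: $F_i$ vanishes to first order exactly on $\Gamma_i$, hence $\deg F_i=|\Gamma_i|$ and $\deg F_i^{\nu_i}=\nu_i|\Gamma_i|=|G|$, the divisor of $F_i^{\nu_i}$ being the $G$-invariant cycle $\nu_i\Gamma_i$ on $\CC P^1$.

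To pin down the order of $\chi$ I would argue locally at a single point $p\in\Gamma_i$. By orbit--stabilizer, $\textrm{Stab}_G(p)$ is cyclic of order $\nu_i$; choosing coordinates so that $p=[1:0]$, the form $Y$ cuts out $p$ and one can write $F_i=Y\,H$ with $H(1,0)\neq0$. For $g$ in the stabilizer both the line $p$ and the form $Y$ are eigenvectors, and comparing $g(YH)=\chi(g)YH$ with the eigenvalue of $g$ at $p$ expresses $\chi(g)$ purely through $\det g$ and the eigenvalue of $g$ on the line $p$. In the dihedral realisation one may take $G^\flat=\DD_N$ acting through the faithful representation $\psi_1$, so that $g^{\nu_i}$ is genuinely the identity on the stabilizer; both scalars are then $\nu_i$-th roots of unity and $\chi^{\nu_i}$ is trivial on $\textrm{Stab}_G(p)$.

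It remains to promote this from a point stabilizer to all of $G^\flat$, and here I would use the relation $F_0^{\nu_0}+F_1^{\nu_1}+F_2^{\nu_2}=0$. Applying $g$ and using that this is, up to scalars, the only linear relation among the three degree-$|G|$ forms $F_i^{\nu_i}$, one obtains $\chi_{F_0}(g)^{\nu_0}=\chi_{F_1}(g)^{\nu_1}=\chi_{F_2}(g)^{\nu_2}$ for every $g$, i.e.\ a single common character $\theta$; this $\theta$ is then identified with the trivial character by exhibiting one manifestly invariant ground form, for instance $F_2=XY\in\CC[X,Y]^{\DD_N}$ by Table \ref{tb:invariants, N odd}, whence $\theta=\chi_{F_2}^{\nu_2}=1$ and therefore $\chi_{F_i}^{\nu_i}=1$ for all $i$. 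For the dihedral group one can in fact bypass this entirely and simply evaluate $g\,F_i^{\nu_i}$ on the two generators $g\in\{r,s\}$ using the explicit forms (\ref{eq:Fs}). The step I expect to be the real obstacle is precisely this passage from stabilizers to the whole group and the attendant bookkeeping of roots of unity: in the general polyhedral setting the M\"obius action lifts only to a genuine central extension $G^\flat$, whose central kernel contributes extra roots of unity that must be tracked to keep the order of $\chi$ dividing $\nu_i$, whereas the dihedral case is benign because $\psi_1$ already realises $\DD_N$ linearly and no central correction arises.
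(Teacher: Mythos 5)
The paper states Lemma \ref{lem:finuiinvariant} without proof, treating it as a classical fact from Klein's theory of ground forms, so there is no authorial argument to compare with; judged on its own merits, your proposal does contain a correct and complete proof for the dihedral case, but its valid core is your third paragraph, not the local analysis. Since the orbits $\Gamma_0,\Gamma_1,\Gamma_2$ are disjoint, any two of the degree-$|G|$ forms $F_i^{\nu_i}$ have different zero sets and are non-proportional, so the space of linear relations among the three is exactly one-dimensional (this also follows from the paper's assertion that $\sum_i F_i^{\nu_i}=0$ is the only relation). Applying $g\in G^\flat$ to this relation then indeed forces $\chi_{F_0}(g)^{\nu_0}=\chi_{F_1}(g)^{\nu_1}=\chi_{F_2}(g)^{\nu_2}=\theta(g)$, and since $F_2=XY$ is an absolute invariant of $G^\flat$ in the relevant realisation --- for odd $N$ by Table \ref{tb:invariants, N odd}, and for even $N$ by Table \ref{tb:invariants, N even} --- one gets $\theta=1$, hence $\chi_{F_i}^{\nu_i}=1$ for all $i$; combined with $\deg F_i^{\nu_i}=\nu_i|\Gamma_i|=|G|$ from Definition \ref{def:form of orbit}, this is precisely the lemma. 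The even shorter route you mention, evaluating $g\,F_i^{\nu_i}$ on the group generators using (\ref{eq:Fs}), also works and is presumably why the paper omits a proof: one finds $F_0\mapsto\pm F_0$, $F_1\mapsto\pm F_1$, $F_2\mapsto F_2$, so $F_0^2$, $F_1^2$ and $F_2^N$ are invariant.

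Two corrections. First, the stabilizer analysis in your second paragraph is simultaneously insufficient on its own (a character trivial on one point stabilizer need not be trivial on $G^\flat$, as you concede) and superfluous once the syzygy argument is in place, so it should be cut. Second, and more substantively, your closing claim that ``the dihedral case is benign because $\psi_1$ already realises $\DD_N$ linearly and no central correction arises'' is wrong for even $N$: as the Proposition in Section \ref{sec:tALIAS} records, $\DD_N^\flat=\DD_{2N}$ in that case, precisely because for even $N$ one has $\rho_r^{N/2}=-\textrm{Id}$ in $\psi_1(\DD_N)$, so the faithful M\"obius action of $\DD_N$ lifts to the linear representation $\psi_1$ of $\DD_N$ itself only for odd $N$; for even $N$ one must use $\psi_1$ of $\DD_{2N}$ (with $\omega$ replaced by a primitive $2N$-th root of unity), and there is a genuine central kernel $\{\pm\textrm{Id}\}$. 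Fortunately your main argument is indifferent to this: the relation (\ref{eq:abg relation}), the invariance of $F_2$, and the generator check all go through verbatim over $\DD_{2N}$, where $F_0$ and $F_1$ afford $\chi_4$ and $\chi_3$ respectively, so that their squares, and $F_2^N$, remain invariant.
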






The next lemma shows one can find all invariant vectors $(V\otimes\mathcal{M}(\overline{\CC}))^G_{\Gamma}$ by considering quotients of invariant vectors and invariant forms whose degrees are multiples of $|G|$.
\begin{lem}
\label{lem:only quotients of invariants}
Let $V$ and $\overline{\CC}$ be $G$-modules and $\Gamma\subset\overline{\CC}$ a single $G$-orbit. Suppose $\bar{v}\in (V\otimes\mathcal{M}(\overline{\CC}))^G_{\Gamma}$. Then $\bar{v}=\frac{v}{F}$ for an invariant vector $v\in\widehat{V}^{G^\flat}$ and an invariant polynomial $F\in \CC[X,Y]^{G^\flat}$ of the same degree, divisible by $|G|$.
\end{lem}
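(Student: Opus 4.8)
The plan is to prove Lemma \ref{lem:only quotients of invariants} by reducing the question about invariant meromorphic sections to a question about invariant polynomials, exploiting the homogeneous structure coming from the identification $\overline{\CC}=\CC P^1$. First I would recall that any meromorphic function on $\overline{\CC}$ is a rational function of $\lambda=\nicefrac{X}{Y}$, so every component of $\bar v$ can be written as a ratio of two homogeneous polynomials of equal degree in $X,Y$; clearing denominators across all components, I may write $\bar v=\nicefrac{w}{P}$ where $w\in\widehat V$ is a (not necessarily invariant) vector of polynomials and $P\in\CC[X,Y]$ is a homogeneous polynomial of the same degree, chosen so that $\bar v$ has poles only on $\Gamma$. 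The condition that $\bar v$ is analytic off $\Gamma$ forces, after cancelling common factors, that the only zeros of $P$ lie on $\Gamma$, so $P$ is (up to a unit) a power $F_\Gamma^{\,k}$ of the form vanishing on $\Gamma$.

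Next I would use the $G$-invariance $g\bar v=\bar v$. Since $g$ acts on $\widehat V$ preserving the grading and acts on $X,Y$ linearly, $g$ permutes the pole set $\Gamma$ and hence sends $F_\Gamma$ to a scalar multiple of itself; passing to the Schur cover $G^\flat$, the form $F_\Gamma$ becomes a genuine relative invariant, i.e. $g\,F_\Gamma=\chi(g)F_\Gamma$ for a one-dimensional character $\chi$ of $G^\flat$. Therefore I can replace the denominator $F_\Gamma^{\,k}$ by an honest invariant: by Lemma \ref{lem:finuiinvariant}, $F_\Gamma^{\,\nu_\Gamma}$ is $G^\flat$-invariant and has degree $|G|$. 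The strategy is to multiply numerator and denominator by whatever power of $F_\Gamma$ is needed so that the denominator becomes $F=F_\Gamma^{\,m\nu_\Gamma}$ for some integer $m$, which is invariant and of degree divisible by $|G|$. With the denominator now genuinely $G^\flat$-invariant, the invariance of $\bar v$ transfers directly to the numerator: from $g\bar v=\bar v$ and $gF=F$ one gets $gw=w$ for all $g\in G^\flat$, so $v:=w\in\widehat V^{G^\flat}$ is the required invariant vector, with $\deg v=\deg F$ divisible by $|G|$.

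The main obstacle I anticipate is bookkeeping around the distinction between $G$ and its Schur cover $G^\flat$, and the associated projective-representation subtleties. The vector $\bar v$ is invariant under the projective (M\"obius) action of $G$, but the forms $F_i$ and the polynomial lift only become relative invariants after passing to $G^\flat$, where scalar factors are resolved into genuine characters. I would need to check carefully that the lift $w$ of $\bar v$ can be chosen so that the one-dimensional character obstructing invariance of the numerator exactly cancels that of the denominator — equivalently, that raising $F_\Gamma$ to the multiplicity $\nu_\Gamma$ (or a multiple thereof) simultaneously clears the character and makes the degree a multiple of $|G|$. This is precisely the role of the relation $\nu_\Gamma|\Gamma|=|G|$ from Definition \ref{def:form of orbit}: it guarantees that $\deg F_\Gamma^{\,\nu_\Gamma}=\nu_\Gamma|\Gamma|=|G|$, so any invariant power of the denominator automatically has degree divisible by $|G|$, and the matching numerator inherits both the invariance and the degree condition.
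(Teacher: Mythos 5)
Your overall strategy is the same as the paper's: write $\bar v$ as a quotient in lowest terms, pin the denominator down to a power of $F_\Gamma$, then multiply numerator and denominator by a further power of $F_\Gamma$ so that the denominator becomes a genuine $G^\flat$-invariant of degree divisible by $|G|$, and let the numerator inherit invariance. However, your first step contains a genuine gap. From ``$\bar v$ is analytic off $\Gamma$'' you conclude that the reduced denominator $P$ is, up to scalar, a power of $F_\Gamma$. Analyticity alone only constrains the zero \emph{set} of $P$: a binary form is a product of linear factors with arbitrary multiplicities, so for instance $P=X^2Y$ has zero set $\{0,\infty\}=\Gamma_2$ without being a power of $F_2=XY$. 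Equality of the multiplicities across the points of $\Gamma$ is exactly what must be proved, and this is the content of the paper's central step: since $v'$ and $F'$ are coprime, the invariance $g\cdot\bar v=\bar v$ forces $g'\cdot v'=Qv'$ and $g'\cdot F'=QF'$ with the \emph{same} scalar $Q$ for a lift $g'\in G^\flat$, so $F'$ is a relative invariant of $G^\flat$; its zero divisor is then $G$-stable with constant multiplicity along the orbit, whence $F'=cF_\Gamma^{\,k}$. (Equivalently and more geometrically: invariance of $\bar v$ together with transitivity of $G$ on $\Gamma$ forces equal pole orders at all points of $\Gamma$.) Your later invariance discussion does not close this hole, because it only establishes that $F_\Gamma$ itself is a relative invariant --- which is true by construction of $F_\Gamma$ --- and says nothing about the actual denominator $P$. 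Since you do have invariance of $\bar v$ in hand, the repair is one line; but as written the inference fails.

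The remainder of your argument is sound, and in one respect tidier than the paper's: the paper splits into the case where $F'$ vanishes on a generic orbit (degree automatically a multiple of $|G|$) and the case of an exceptional orbit (multiply by a power of $F_i$), whereas you treat all orbits uniformly through $F_\Gamma^{\,\nu_\Gamma}$ and the relation $\nu_\Gamma|\Gamma|=|G|$. One citation should be adjusted, though: Lemma \ref{lem:finuiinvariant} covers only the exceptional forms $F_i$; for a generic orbit (where $\nu_\Gamma=1$) the genuine invariance of $F_\Gamma$ follows instead from the fact that every generic orbit form is a linear combination of the $F_i^{\nu_i}$, as used in Example \ref{ex:automorphic functions}. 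Your final step --- transferring invariance to the numerator once the denominator is honestly invariant --- agrees with the paper, modulo the same unspoken check both arguments make, namely that the kernel of $G^\flat\to G$ acts trivially in the degrees that occur.
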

\begin{proof}
By the identification $\overline{\CC}\cong\CC P^1$, $\bar{v}=\frac{v'}{F'}$ for some $(X,Y)$-dependent vector $v'$ and scalar $F'$ of the same degree. Assume $v'$ and $F'$ have no common factors.

We first argue that $F'$ is a relative invariant of $G^\flat$. Let $g\in G$. Then $g\cdot \bar{v}=g\cdot \frac{v'}{F'}= \frac{g'\cdot v'}{g'\cdot F'}$ for an element in the Schur covering group $g'\in G^\flat$. Hence, by invariance and because $v'$ and $F'$ have no common factor, $g'\cdot v'=Qv'$ and $g'\cdot F'=QF'$ for some
$Q\in\CC$. Moreover, $G^\flat$ is finite so that $Q$ is a root of unity and hence $F'\in\CC[X,Y]^{[G^\flat,G^\flat]}=\bigslant{\CC[F_0,F_1,F_2]}{F_0^{\nu_0}+F_1^{\nu_1}+F_2^{\nu_2}}$. 

If $F'$ vanishes on a generic orbit then its degree is a multiple of $|G|$ and the result follows. If $F'$ vanishes on an exceptional orbit then it is a power of $F_i$ for some $i$.
One can multiply the numerator and denominator of $\bar{v}$ with a power of $F_i$ to get a quotient of invariants and because the degree of $F_i$ divides $|G|$ this can be done ending up with degrees divisible by $|G|$.


\end{proof}

\begin{example}
\label{ex:automorphic functions}
If we apply Lemma \ref{lem:only quotients of invariants} to the trivial $G$-module $V=\CC$, one obtains the automorphic functions of $G$ with poles on a specified orbit. If $f\in\mathcal{M}(\overline{\CC})^G_{\Gamma}$ the invariance and pole restriction imply that the denominator of $f$ is a power of $F_{\Gamma}$. By Lemma \ref{lem:only quotients of invariants} we can assume that it is in fact a power of $F_\Gamma^{\nu_\Gamma}$
and that the numerator is a form in $\CC[F_0^{\nu_0},F_1^{\nu_1},F_2^{\nu_2}]$. The numerator factors into elements of the two-dimensional vector space $\bigslant{\CC F_0^{\nu_0}+\CC F_1^{\nu_1}+\CC F_2^{\nu_2}}{F_0^{\nu_0}+F_1^{\nu_1}+F_2^{\nu_2}}$. One line in this space is $\CC F_\Gamma^{\nu_\Gamma}$. Any vector outside this line will generate the ring of automorphic functions, e.g.
\[
\mathcal{M}(\overline{\CC})^G_{\Gamma}=\CC\left[\frac{F_{i}^{\nu_{i}}}{F_\Gamma^{\nu_\Gamma}}\right]\qquad \Gamma_i\ne\Gamma\,.
\] 
\end{example}

\section{Towards Automorphic Lie Algebras}
\label{sec:tALIAS} 

Our approach to investigate Automorphic Lie Algebras was briefly sketched in the Introduction. In this section we provide the necessary justification.

The field of meromorphic functions on the Riemann sphere $\mathcal{M}(\overline{\CC})$ equals the set of rational functions on $\overline{\CC}$, that is, quotients of polynomials. Using the identification \[\overline{\CC}\ni\lambda=[(X,Y)]\in\CC P^1\] this is the set of quotients of two homogeneous polynomials in $X$ and $Y$ of the same degree. Moreover, M\"obius transformations on $\lambda$ correspond to linear transformation on $(X,Y)$ by the same matrix. However, two matrices yield the same M\"obius transformation if and only if they are scalar multiples of one another. Therefore we allow the action on $(X,Y)$ to be projective in order to cover all possibilities. That is, we require projective representations $\rho:G\rightarrow \textrm{PGL}(\CC^2)$. The same holds for the action on the Lie algebra $\mf{g}(V)<\mf{gl}(V)$.

Projective representations of $G$ are in one to one correspondence with linear representations of a \emph{Schur covering group} $G^\flat$.
The Schur covering groups of the dihedral groups are well known \cite{Re76}.
\begin{prop}
If $N$ is odd then the dihedral group and its Schur covering group coincide: $\DD_N^\flat=\DD_N$. If $N$ is even then $\DD_{2N}$ is one of multiple Schur covering groups for $\DD_N$. 
\end{prop}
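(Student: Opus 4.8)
The plan is to treat the two parities separately, in each case first identifying the Schur multiplier $H^2(\DD_N,\CC^\ast)$ and then exhibiting an explicit group realising it. Recall that a \emph{Schur covering group} (representation group) of $G$ is a central extension
\[
1\longrightarrow A\longrightarrow G^\flat\longrightarrow G\longrightarrow 1
\]
with $A\cong H^2(G,\CC^\ast)$ and $A\subseteq Z(G^\flat)\cap[G^\flat,G^\flat]$; a stem extension of this maximal size is precisely one in which every projective representation of $G$ lifts to a linear representation of $G^\flat$. Thus in each case I must determine $|A|=|H^2(\DD_N,\CC^\ast)|$ and then display a group of order $2N\,|A|$ whose kernel over $\DD_N$ is central and contained in the derived subgroup.

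For $N$ odd I would show the multiplier vanishes. Every Sylow subgroup of $\DD_N$ is cyclic: the Sylow $2$-subgroup has order $2$ because $|\DD_N|=2N$ with $N$ odd, and for an odd prime $p\mid N$ the Sylow $p$-subgroup sits inside the cyclic rotation group $\langle r\rangle\cong\zn{N}$. By the usual restriction--transfer argument the $p$-primary component of $H^2(\DD_N,\CC^\ast)$ embeds into the multiplier of a Sylow $p$-subgroup, and cyclic groups have trivial multiplier. Hence $H^2(\DD_N,\CC^\ast)=0$, the only stem extension is the trivial one, and $\DD_N^\flat=\DD_N$.

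For $N$ even the multiplier is $H^2(\DD_N,\CC^\ast)\cong\zn{2}$, the classical value recorded in \cite{Re76}, so a Schur cover has order $4N$. I would check that $\DD_{2N}$ qualifies, working directly from the presentations. Sending the generators $r,s$ of $\DD_{2N}$ to the like-named generators of $\DD_N$ is a well-defined surjection, since the defining relations of $\DD_{2N}$ map to relations holding in $\DD_N$; its kernel is $\langle r^N\rangle\cong\zn{2}$, as $r^N$ has order two in $\DD_{2N}$ and maps to $r^N=1$ in $\DD_N$. The identity $s\,r^N s^{-1}=r^{-N}=r^N$ shows $r^N$ is central, and since $[\DD_{2N},\DD_{2N}]=\langle r^2\rangle$ while $N$ is even, we have $r^N=(r^2)^{N/2}\in[\DD_{2N},\DD_{2N}]$. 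The kernel therefore lies in $Z(\DD_{2N})\cap[\DD_{2N},\DD_{2N}]$ and has the right order, so $\DD_{2N}$ is a representation group of $\DD_N$.

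Finally, to account for ``one of multiple'' I would exhibit a second, non-isomorphic cover: the dicyclic group $\mathrm{Dic}_N=\langle a,b\mid a^{2N}=1,\ b^2=a^N,\ bab^{-1}=a^{-1}\rangle$ of order $4N$. Sending $a\mapsto r$, $b\mapsto s$ gives a surjection onto $\DD_N$ with central kernel $\langle a^N\rangle=\langle b^2\rangle$, which lies in $[\mathrm{Dic}_N,\mathrm{Dic}_N]=\langle a^2\rangle$ because $N$ is even; hence it too is a Schur cover. It is not isomorphic to $\DD_{2N}$, since $\mathrm{Dic}_N$ contains a unique involution while $\DD_{2N}$ contains many. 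The main obstacle is conceptual rather than computational: one must handle the representation-group framework carefully and note that both the centrality and the containment $r^N\in[\DD_{2N},\DD_{2N}]$ genuinely rely on the parity of $N$. Indeed for $N$ odd the element $r^N$ is an odd power of $r$ and so escapes $\langle r^2\rangle$, which is exactly why $\DD_{2N}\cong\DD_N\times\zn{2}$ is then \emph{not} a stem extension, consistent with the vanishing of the multiplier.
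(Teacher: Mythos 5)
Your proof is correct, but it is worth pointing out that the paper offers no proof of this proposition at all: it is stated as a known fact with a citation to Read \cite{Re76}, where the Schur multipliers of the imprimitive reflection groups $G(m,p,n)$ (which include the dihedral groups) are computed. So your proposal supplies an argument where the paper supplies only a reference, and the argument you give is the standard one, correctly executed. For $N$ odd, the restriction--corestriction reduction to Sylow subgroups (all of which are cyclic in $\DD_N$, and cyclic groups have trivial multiplier) does give $H^2(\DD_N,\CC^\ast)=0$ and hence $\DD_N^\flat=\DD_N$. For $N$ even, your verification that the kernel $\langle r^N\rangle$ of $\DD_{2N}\twoheadrightarrow\DD_N$ lies in $Z(\DD_{2N})\cap[\DD_{2N},\DD_{2N}]$ is right, and you correctly isolate the parity point: $r^N=(r^2)^{N/2}\in\langle r^2\rangle=[\DD_{2N},\DD_{2N}]$ precisely because $N$ is even, while for $N$ odd the extension degenerates to $\DD_N\times\zn{2}$, consistent with the vanishing multiplier. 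One logical remark deserves emphasis: exhibiting the stem extension $\DD_{2N}\to\DD_N$ only proves the lower bound $|H^2(\DD_N,\CC^\ast)|\ge 2$, since the kernel of any stem extension is a quotient of the multiplier; to conclude that an order-$4N$ stem extension is a genuine representation group you also need the upper bound $|H^2(\DD_N,\CC^\ast)|\le 2$, and this is the single place where you, like the paper itself, lean on the literature rather than computing (a self-contained route would be Hopf's formula applied to the standard presentation). Your dicyclic witness $\mathrm{Dic}_N$ for the phrase ``one of multiple'' covering groups, distinguished from $\DD_{2N}$ by its unique involution, is a correct and welcome addition, since the paper never justifies that clause either.
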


We consider a faithful linear representations of $G^\flat$ 
\[\sigma:G^\flat\rightarrow \textrm{GL}(\CC^2)\] 
which restricts $G$ to the groups 
\[\zn{N}\,,\quad\DD_N\,,\quad\TT\,,\quad\OO\,,\quad\YY\] 
of Klein's classification \cite{Klein56,Klein93}, where $\zn{N}$ is the cyclic group, $\DD_N$ the dihedral group, $\TT$ the tetrahedral group, $\OO$ the octahedral group and $\YY$ the icosahedral group.

Let \[\tau:G^\flat\rightarrow\textrm{GL}(V)\] be any irreducible linear representation and consider $\mf{g}(V)$ to be a Lie subalgebra and $G^\flat$-submodule of $\mf{gl}(V)$.
The representations $\sigma$ and $\tau$ induce an action on the space $\mf{g}(V)\otimes\CC[X,Y]$, which concretely reads 
\[g\cdot\big(M\otimes F(X,Y)\big)=\tau_g M \tau_{g^{-1}}\otimes F\big(\sigma_{g^{-1}}(X,Y)\big)
\]
where $M\in \mf{g}(V)$ and $F(X,Y)\in \CC[X,Y]$.
Notice that the space of invariants $(\mf{g}(V)\otimes\CC[X,Y])^{G^\flat}$ is a Lie subalgebra because the $G^\flat$-action respects the Lie bracket.

\begin{definition}[Localisation] Let $F\in\CC_d[X,Y]^G$ and define the multiplicative set $U_F=\bigcup_{\ell\ge0}\CC^*F^\ell$. The space of invariants $\widehat{V}^G$ is a $\CC[X,Y]^G$-module and one can consider its localisation by $U_F$, which we denote \[\widehat{V}^G_F=U_F^{-1}\widehat{V}^G=\left\{\frac{v}{F^\ell}\;|\;v\in \widehat{V}^G,\,\ell\ge 0\right\}\,.\]
\end{definition}
The localisation of a ring by a multiplicative subset $U$ is the smallest extension of the ring in which all elements in $U$ are units, that is, have a multiplicative inverse.
\begin{definition}
Let $G^\flat$ be the Schur cover of $G$; consider the localisation $\widehat{V}^{G^\flat}_{F_{\Gamma}}$. We define $\overline{V}^G_{\Gamma}<\widehat{V}^{G^\flat}_{F_{\Gamma}}$ as the subset of elements that factor through $\CC P^1$
\[
\overline{V}^G_{\Gamma}=\left\{v\in\widehat{V}^{G^\flat}_{F_{\Gamma}}\;|\;v(zX,zY)=v(X,Y),\, \forall z\in\CC^*,\, \forall(X,Y)\in\CC^2\setminus (0,0)\right\}.
\]
\end{definition}
Notice that $v\in\overline{V}^G_{\Gamma}$ has no poles outside $\Gamma$.

Finally, we define two operators, \emph{prehomogenisation} $p$ and \emph{homogenisation} $h$, which will take us from $\widehat{V}^{G^\flat}$ to $\overline{V}^G_{\Gamma}$ in two steps. By taking just the first step, one can study $\overline{V}^G_{\Gamma}$ while holding on to the degree information of the homogeneous polynomials in $(X,Y)$. 

\begin{definition}[Prehomogenisation]
Define $p_d:\widehat{V}^{G}\rightarrow\widehat{V}^{G}$ on basis vectors $v$ of homogeneous degree by \[p_dv=\left\{
\begin{tabular}{ll}
$v$&$\textrm{if }d\,|\deg v$\\
$0$&$\textrm{otherwise}\,.$
\end{tabular}
\right.\]
Then extend linearly.
\end{definition}
\begin{definition}[Homogenisation]
\label{def:hom}
Let $\Gamma\subset\overline{\CC}$ be a $G$-orbit, $F_\Gamma$ defined as in (\ref{eq:fgamma}), and suppose $|\Gamma| \;|\; d$. Define $h_\Gamma:p_d\widehat{V}^{G^\flat}\rightarrow\overline{V}^{G}_\Gamma$ on basis vectors $v$ of homogeneous degree $\ell|\Gamma|$ by
\[h_\Gamma v=\frac{v}{F_\Gamma^\ell}\,.\]
Then extend linearly.
\end{definition}

We define equivalence classes \emph{$\textrm{mod }F$} on $\widehat{V}^G$ by 
\[u\textrm{ mod }F = v\textrm{ mod }F\Leftrightarrow\exists \ell\in\ZZ : u=F^\ell v\,.\]
\begin{lem} Let $\Gamma\subset\overline{\CC}$ be a $G$-orbit, $F_\Gamma$ defined as in (\ref{eq:fgamma}),  and suppose $|\Gamma| \;|\; d$. There is a ring-module isomorphism
\[p_d\widehat{V}^{G^\flat}\textrm{mod }F_\Gamma \cong \overline{V}^G_\Gamma\,.\]
\end{lem}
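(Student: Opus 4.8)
The plan is to show that the homogenisation operator $h_\Gamma$ of Definition \ref{def:hom} is itself the required isomorphism once one checks that it is constant on $\textrm{mod }F_\Gamma$-classes, and to obtain its inverse by clearing denominators and prehomogenising. First I would verify that $h_\Gamma$ factors through the equivalence. The key point is that if $u,v\in\widehat{V}^{G^\flat}$ are both invariant and $u=F_\Gamma^\ell v$ for some $\ell\in\ZZ$, then applying $g\in G^\flat$ and using that $F_\Gamma$ is only a \emph{relative} invariant, $g\cdot F_\Gamma=\chi(g)F_\Gamma$, forces $\chi(g)^\ell=1$ for all $g$; hence $\nu_\Gamma\mid\ell$ and $F_\Gamma^\ell$ is a genuine invariant, a power of the invariant $F_\Gamma^{\nu_\Gamma}$ of Lemma \ref{lem:finuiinvariant}. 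For homogeneous representatives of degrees $m|\Gamma|$ and $(m-\ell)|\Gamma|$ one then computes $h_\Gamma u=u/F_\Gamma^{m}=F_\Gamma^\ell v/F_\Gamma^{m}=v/F_\Gamma^{m-\ell}=h_\Gamma v$, so $h_\Gamma$ descends to a well-defined $\bar h_\Gamma\colon p_d\widehat{V}^{G^\flat}\textrm{ mod }F_\Gamma\to\overline{V}^G_\Gamma$; the image lies in $\overline{V}^G_\Gamma$ because $v/F_\Gamma^{m}$ is invariant of degree $0$ and so factors through $\CC P^1$.

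For injectivity I would exploit that $\widehat{V}=V\otimes\CC[X,Y]$ is free over the integral domain $\CC[X,Y]$, so $F_\Gamma$ is a nonzerodivisor and equality of fractions in the localisation lifts to an honest relation: if $h_\Gamma u=h_\Gamma v$ with $u,v$ homogeneous of degrees $m|\Gamma|$, $m'|\Gamma|$, then $F_\Gamma^{m'}u=F_\Gamma^{m}v$ in $\widehat{V}^{G^\flat}$, whence (taking $m'\ge m$ without loss of generality) $v=F_\Gamma^{m'-m}u$ and thus $u\equiv v\ (\textrm{mod }F_\Gamma)$. So $\bar h_\Gamma$ is injective.

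Surjectivity is where the degree bookkeeping enters. Given $\bar v\in\overline{V}^G_\Gamma$, Lemma \ref{lem:only quotients of invariants} lets me write $\bar v=v/F$ with $v\in\widehat{V}^{G^\flat}$ and $F\in\CC[X,Y]^{G^\flat}$ of equal degree divisible by $|G|$; restricting attention to the single pole orbit $\Gamma$ forces $F$ to be a power of $F_\Gamma^{\nu_\Gamma}$, so $\bar v=v/F_\Gamma^{m}$ with $v$ homogeneous of degree $m|\Gamma|=k|G|$. To land inside $p_d\widehat{V}^{G^\flat}$ I would multiply numerator and denominator by a suitable genuinely invariant power $F_\Gamma^{\nu_\Gamma j}$, raising the numerator degree to $(k+j)|G|$; since $\nu_\Gamma|\Gamma|=|G|$ and $|\Gamma|\mid d$, one can choose $j$ so that $(k+j)|G|$ is a common multiple of $|G|$ and $d$, producing a representative in $p_d\widehat{V}^{G^\flat}$ whose class maps to $\bar v$.

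Finally, for the ring and module structures: by Example \ref{ex:automorphic functions}, passing $\textrm{mod }F_\Gamma$ turns $\CC[X,Y]^{G^\flat}$ into the automorphic functions $\mathcal{M}(\overline{\CC})^G_\Gamma$ that act on $\overline{V}^G_\Gamma$, and when $V$ additionally carries an associative or Lie product this action and product are preserved fibrewise; because $h_\Gamma$ merely divides by the central scalar forms $F_\Gamma^{m}$, it distributes over both products and the module action, so $\bar h_\Gamma$ is simultaneously a ring and a module homomorphism, hence an isomorphism. The step I expect to be the main obstacle is precisely the degree bookkeeping in the surjectivity argument: one must check that the relative (rather than genuine) invariance of $F_\Gamma$, combined with the hypothesis $|\Gamma|\mid d$ and the relation $\nu_\Gamma|\Gamma|=|G|$, is exactly what guarantees every class has an honest $G^\flat$-invariant representative of degree divisible by $d$, with any two such representatives differing only by a genuinely invariant power $F_\Gamma^{\nu_\Gamma}$, since multiplying by an arbitrary power of $F_\Gamma$ would leave $\widehat{V}^{G^\flat}$.
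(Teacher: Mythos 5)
Your proposal is correct and follows essentially the same route as the paper: the paper's own proof consists precisely of observing that $h_\Gamma$ respects products (hence is a homomorphism of modules), that its kernel is the equivalence class of the identity, and that surjectivity follows from Lemma \ref{lem:only quotients of invariants}; your clearing-of-denominators injectivity argument and the degree bookkeeping for surjectivity merely make those three lines explicit. One caveat: your side claim that $\chi(g)^\ell=1$ for all $g\in G^\flat$ forces $\nu_\Gamma\mid\ell$ (repeated in your closing sentence as ``any two invariant representatives differ by a power of $F_\Gamma^{\nu_\Gamma}$'') is false, since the order of the character $\chi$ can be a proper divisor of $\nu_\Gamma$ --- for instance $F_2=XY$ is a genuine $\DD_N$-invariant of degree $2$ while $\nu_2=N$, so two invariant elements may differ by $F_2^\ell$ with $N\nmid\ell$ --- but this is harmless to the proof, because the displayed well-definedness computation, the injectivity step, and the surjectivity step (which only ever multiplies by the honestly invariant $F_\Gamma^{\nu_\Gamma}$ supplied by Lemma \ref{lem:finuiinvariant}) never actually use that claim.
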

\begin{proof}
The linear map $h_\Gamma:p_d\widehat{V}^{G^\flat}\rightarrow\overline{V}^{G}_\Gamma$ respects products and is therefore a homomorphism of modules. Its kernel is the equivalence class of the identity and by Lemma \ref{lem:only quotients of invariants} the map is surjective.
\end{proof}
In particular, since the Lie algebra structure of the Automorphic Lie Algebras depends only on the ring structure, that is, addition and multiplication, the previous Lemma also gives Lie algebra isomorphisms.
Notice also that it will always suffice to use $d=|G|$ since the order of any orbit divides the order of the group. Thus we have
\begin{cor} 
Adopting the notation from above, there is a Lie algebra isomorphism
\[p_{|G|}\widehat{\mf{g}(V)}^{G^\flat}\textrm{mod }F_\Gamma\cong \overline{\mf{g}(V)}^G_\Gamma\]
\end{cor}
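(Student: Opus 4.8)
The plan is to deduce the Corollary directly from the Lemma immediately preceding it, which establishes a ring-module isomorphism $p_d\widehat{V}^{G^\flat}\textrm{ mod }F_\Gamma \cong \overline{V}^G_\Gamma$ for any $G$-module $V$ and any $d$ divisible by $|\Gamma|$. The Corollary is the specialisation $V=\mf{g}(V)$ together with the choice $d=|G|$, so the work is almost entirely a matter of verifying that these two specialisations are legitimate and that the isomorphism upgrades from rings-modules to Lie algebras.

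First I would justify the choice $d=|G|$. By Definition \ref{def:form of orbit}, every orbit $\Gamma$ satisfies $\nu_\Gamma|\Gamma|=|G|$ with $\nu_\Gamma\in\NN$, so in particular $|\Gamma|$ divides $|G|$. Hence the hypothesis $|\Gamma|\;|\;d$ of the Lemma is satisfied by taking $d=|G|$, and the Lemma applies verbatim. This is exactly the observation recorded in the sentence preceding the Corollary, so I would simply invoke it.

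Next I would address why the isomorphism respects the Lie bracket and not merely the ring-module structure. The map realising the isomorphism is $h_\Gamma$, which by the proof of the Lemma is a homomorphism of modules that respects products; concretely $h_\Gamma(v)=v/F_\Gamma^\ell$ on a homogeneous vector of degree $\ell|\Gamma|$, so it is multiplication by a (locally constant) scalar function of the degree and commutes with all bilinear operations coming from the tensor factor $\mf{g}(V)$. Since the $G^\flat$-action respects the Lie bracket on $\mf{g}(V)\otimes\CC[X,Y]$ — as noted after the action was defined — the space of invariants $\widehat{\mf{g}(V)}^{G^\flat}$ is a Lie subalgebra, the prehomogenisation $p_{|G|}$ preserves this structure on the degree-graded pieces, and the bracket on $\overline{\mf{g}(V)}^G_\Gamma$ is inherited from $\mf{gl}(V)\otimes\mathcal{M}(\overline{\CC})$. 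Because the Lie bracket is determined purely by addition and multiplication in the coefficient ring together with the fixed structure constants of $\mf{g}(V)$, any ring-module isomorphism intertwining these operations automatically intertwines the brackets. This is precisely the principle stated in the remark after the Lemma.

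The step I expect to require the most care is the bracket-preservation argument, since one must confirm that passing to the quotient \emph{mod }$F_\Gamma$ is compatible with the bracket rather than just with the associative product; the subtlety is that the bracket of two elements of degree $\ell_1|\Gamma|$ and $\ell_2|\Gamma|$ lands in degree $(\ell_1+\ell_2)|\Gamma|$, so one checks that $h_\Gamma[u,v]=[h_\Gamma u, h_\Gamma v]$ by matching the exponent $F_\Gamma^{\ell_1+\ell_2}$ on both sides, which follows from the grading being respected by the bracket on $\mf{g}(V)\otimes\CC[X,Y]$. Everything else is a formal substitution into the already-proved Lemma, so the Corollary follows.
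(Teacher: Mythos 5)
Your proposal is correct and takes essentially the same route as the paper: the paper obtains the Corollary by specialising the preceding Lemma to $d=|G|$ (noting that $|\Gamma|$ divides $|G|$ since $\nu_\Gamma|\Gamma|=|G|$) and by the remark that the Lie bracket depends only on addition and multiplication, so the ring-module isomorphism is automatically a Lie algebra isomorphism. Your explicit verification that $h_\Gamma[u,v]=[h_\Gamma u,h_\Gamma v]$ via the matching exponents of $F_\Gamma$ merely spells out what the paper leaves implicit.
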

Because $p_{|G|}\widehat{\mf{g}(V)}^{G^\flat}$ is independent of the orbit $\Gamma$, one can study all Automorphic Lie Algebras $\left\{\overline{\mf{g}(V)}^G_\Gamma\;|\;\Gamma \textrm{ an orbit}\right\}$ by studying  the one Lie algebra $p_{|G|}\widehat{\mf{g}(V)}^{G^\flat}$.

The prehomogenisation projection $p_d$ only concerns degrees. Therefore it makes sense to define it on Poincar\'e series. Since there is little opportunity for confusion, we use the same name:
\[p_d:\ZZ_{\ge0}[t]\rightarrow\ZZ_{\ge0}[t]\,,\quad \sum_{\ell\ge0}k_\ell t^\ell\mapsto\sum_{d\,|\,\ell}k_\ell t^\ell\,.\]

\begin{example}[$\overline{V_{\chi_1}}^{\DD_N}_{\Gamma}$]
\label{ex:automorphic forms} The case of the trivial representation $V_{\chi_1}$ corresponds to automorphic functions.
First we assume that $N$ is odd so that $\DD_N^\flat=\DD_N$ and $\widehat{V}_{\chi_1}^{\DD_N^\flat}=\widehat{V}_{\chi_1}^{\DD_N}=\CC[F_1,F_2]$. Recall that $\deg F_1=N$ and $\deg F_2=2$ and the two forms are algebraically independent. 
The prehomogenisation projection $p_{2N}$ maps
\[
\fl
P(\widehat{V}_{\chi_1}^{\DD_N^\flat},t)=\frac{1}{(1-t^2)(1-t^N)}=\frac{(1+t^2+t^4+\ldots+t^{2N-2})(1+t^N)}{(1-t^{2N})^2}\mapsto\frac{1}{(1-t^{2N})^2}
\]
and one finds \[p_{2N}\CC[F_1,F_2]=\CC[F_1^{\nu_1},F_2^{\nu_2}]\,.\]
If $N$ is even we use the Schur cover $\DD_{2N}$ with invariant forms $\widehat{V}_{\chi_1}^{\DD_{2N}}=\CC[F_1^{2},F_2]$, which are mapped onto the same ring under $p_{2N}$.

If the equivalence relation $\textrm{mod }F_\Gamma$ is introduced one is left with $\CC[F_k^{\nu_k}]$ for $F_k\ne F_\Gamma$.
Notice that this ring is equivalent to $\mathcal{M}(\overline{\CC})^G_{\Gamma}=\CC\left[\frac{F_{k}^{\nu_{k}}}{F_\Gamma^{\nu_\Gamma}}\right]$, $\Gamma_k\ne\Gamma$ from Example \ref{ex:automorphic functions}.

\end{example}

\section{Automorphic Lie Algebras with dihedral symmetry}
\label{sec:D_NALIAS} 

Recall from Example \ref{ex:sl2 invariants} that \[\widehat{\mf{sl}(V_{\psi_j})}^{\DD_N}=\CC[{F_1},{F_2}]\eta_1\oplus\CC[{F_1},{F_2}]\eta_2\oplus\CC[{F_1},{F_2}]\eta_3\] 
where, in our preferred basis (\ref{eq:standard matrices}), $F_0$, $F_1$ and $F_2$ are given in (\ref{eq:Fs}) and where
\begin{center}
\begin{tabular}{lcr}
$\eta_1=  \begin{mymatrix} 0 & X^{2j} \\ Y^{2j} & 0 \end{mymatrix}\,,$ &
$\eta_2 =\begin{mymatrix} 0 & Y^{N-2j} \\ X^{N-2j} & 0 \end{mymatrix}\,,$ & 
$\eta_3 =\nicefrac{1}{2}(X^N-Y^N) \begin{mymatrix}1 & 0 \\ 0 &-1 \end{mymatrix}\,.$
\end{tabular}
\end{center}
The matrices $\eta_1$, $\eta_2$ and $\eta_3$ generate a Lie algebra over $\CC[F_1, F_2]$, with commutator brackets
\begin{eqnarray*}
&& [\eta_2,\eta_3]=2(-{F_2}^{N-2j}\eta_1+{F_1} \eta_2)\\
&& [\eta_3,\eta_1]=2({F_1} \eta_1-{F_2}^{2j} \eta_2)\\
&&[\eta_1,\eta_2]=2\eta_3\,.
\end{eqnarray*}
Notice that this is not an Automorphic Lie Algebra yet, 
but it can be made into one using the homogenisation operator $h$ (see Definition \ref{def:hom}). 
\begin{example} 
It is useful to compare ours with previous results, starting from this algebra. In particular, let us consider \cite{LM05comm} and \cite{Bury2010}, which contain explicit descriptions of Automorphic Lie Algebras with dihedral symmetry with poles restricted to the orbit of two points, which we call $\Gamma_2=\{0,\infty\}$.

The generators (27) in \cite{LM05comm}, page 190, where $N=2$ and $j=1$ 
are nothing but
$h_{\Gamma_2}(\eta_1)$, $h_{\Gamma_2}(\eta_2)$ and $2h_{\Gamma_2}(\eta_3)$, respectively.

Similarly,  generators (4.10) in \cite{Bury2010}, page 81, for general $N$ and $j$, are $h_{\Gamma_2}(\eta_1)$, $h_{\Gamma_2}(2\eta_2-\eta_1)$ and $4h_{\Gamma_2}(\eta_3)$ respectively, if $N$ is odd, and $h_{\Gamma_2}(\eta_1)$, $h_{\Gamma_2}(\eta_2)$ and $2h_{\Gamma_2}(\eta_3)$ respectively, if $N$ is even.
%
%
\end{example}

Observe that he image of $\widehat{\mf{sl}(V_{\psi_j})}^{\DD_N^\flat}$ under $p_{|\DD_N|}=p_{2N}$ is 
\[\CC[F_1^2,F_2^N]\left(F_2^{N-j}\eta_1\oplus F_1F_2^j\eta_2\oplus F_1\eta_3\right)\]
regardless of whether $N$ is odd, in which case $\DD_N^\flat =\DD_N$, or even, in which case we use $\DD_N^\flat =\DD_{2N}$. This reflects the fact that $p_{|G|}\widehat{\mf{sl}(V_{\psi_j})}^{G^\flat}\textrm{mod }F_\Gamma \cong \overline{\mf{sl}(V_{\psi_j})}^{G}_\Gamma$ is a space of $G$-invariants rather than $G^\flat$-invariants (see also Example \ref{ex:automorphic forms}). 

The module generators of $p_{|\DD_N|}\widehat{\mf{sl}(V_{\psi_j})}^{\DD_N^\flat}$ will be denoted with a tilde:
\begin{eqnarray*}
\tilde{\eta}_1=F_2^{N-j}\eta_1=F_2^N\begin{mymatrix}0&\lambda^j\\\lambda^{-j}&0\end{mymatrix}\,;\\
\tilde{\eta}_2=F_1F_2^j\eta_2=F_2^N\begin{mymatrix}0&\nicefrac{1}{2}(\lambda^j+\lambda^{j-N})\\\nicefrac{1}{2}(\lambda^{-j}+\lambda^{N-j})&0\end{mymatrix}\,;\\
\tilde{\eta}_3=F_1\eta_3=F_0F_1\begin{mymatrix}1&0\\0&-1\end{mymatrix}\,.
\end{eqnarray*}
One readily computes the structure constants
\begin{eqnarray*}
&& [\tilde{\eta}_2,\tilde{\eta}_3]=2(-{F_1}^{2}\tilde{\eta}_1+{F_1}^2 \tilde{\eta}_2)\\
&& [\tilde{\eta}_3,\tilde{\eta}_1]=2({F_1}^2 \tilde{\eta}_1-{F_2}^{N} \tilde{\eta}_2)\\
&&[\tilde{\eta}_1,\tilde{\eta}_2]=2F_2^N\tilde{\eta}_3\,.
\end{eqnarray*}
At this stage, one could apply the homogenisation operator $h$ to obtain the wanted Automorphic Lie Algebra. However, before doing so, we prefer to define a \emph{normal form} notion.

\subsection{A normal form for Automorphic Lie Algebras}
\label{sec:nf}

In order to compare Automorphic Lie Algebras with each other it is useful to define a normal form. Once two algebras are transformed to a particular normal form one can immediately see whether they are isomorphic.
 We 
 define a normal form for Automorphic Lie Algebras as similar to the Serre normal form for simple Lie algebras as possible, and we will use here the same name.
\begin{definition}[Serre normal form]
Let $\mf{g}(V)$ be a simple Lie algebra with Cartan subalgebra (CSA) $\mf{h}$. We say that the module generators of an Automorphic Lie Algebra $\overline{\mf{g}(V)}^G_\Gamma$ are in Serre normal form 
if the adjoint action of $\dim_\CC\mf{h}$ of the $\dim_\CC\mf{g}(V)$ generators is diagonal on all generators, with the same roots as $\mf{h}$ in Serre normal form.
\end{definition}
This definition relies on the fact that the Automorphic Lie Algebra $\overline{\mf{g}(V)}^G_\Gamma$ can be generated by  $\dim \mf{g}(V)$ elements. In the previous section we saw that this is the case if $G=\DD_N$: we found the $\tilde{\eta}_i$. In \cite{KLS_2014_SLN} it is shown that this holds for all the polyhedral groups.

Despite this fact, it is not obvious to us for which Automorphic Lie Algebras a Serre normal form exists. Moreover, we find that the normal form depends on the Automorphic Lie Algebra CSA; this is, to the best of our knowledge, a new phenomenon in Lie algebra theory, and it deserves further investigation. The fact that we work over a ring complicates matters compared to the classical situation. The main result of this paper describes the Lie algebra structure of Automorphic Lie Algebras with dihedral symmetry by a construction of this normal form. We reformulate this theorem, which was already sketched in the Introduction, as follows:
\begin{thm}
\label{thm:main2}

Let $\DD_N$ act faithfully on the Riemann sphere and let $\Gamma$ be a single orbit therein.
Let $(c_1,c_2)\in\CC P^1$ be such that \[F=c_1F_1^2+c_2F_2^N\] vanishes on $\Gamma$, if $F_i$ is given by (\ref{eq:Fs}). 

Then, adopting the previous notation, for $\Gamma_k\ne \Gamma$,
\[\overline{\mf{sl}(V_{\psi_j})}^{\DD_N}_{\Gamma}=\left(e_0\oplus e_+\oplus e_-\right)\otimes\CC\left[\frac{F_k^{\nu_k}}{F}\right]\]
and 
\begin{eqnarray*}
&&[e_0,e_\pm]=\pm 2e_\pm 
\\&&[e_+,e_-]=e_0\prod_{i=0}^2\frac{F_i^{\nu_i}}{F}\,.
\end{eqnarray*}
A set of module-generators for this normal form is given by
\begin{eqnarray*}
\fl
e_0=\frac{1}{F}\begin{mymatrix}
c_1F_0F_1 & F_2^N(c_2\lambda^j+\nicefrac{c_1}{2}(\lambda^j+\lambda^{j-N}))\\
F_2^N(c_2\lambda^{-j}+\nicefrac{c_1}{2}(\lambda^{-j}+\lambda^{N-j})) & -c_1F_0F_1
\end{mymatrix}\\
\fl
e_+=\frac{F_0F_1F_2^N}{2F^2}\begin{mymatrix}
1&-\lambda^j\\
\lambda^{-j}&-1
\end{mymatrix}\\
\fl
e_-
=\frac{F_0F_1F_2^N}{2F^2}
\begin{mymatrix}
c_1+c_2+c_1c_2\frac{F_0^2}{F}&\left(c_2(c_1+c_2)\frac{F_2^N}{F}+c_1\lambda^{-N}\right)\lambda^{j}\\
-\left(c_2(c_1+c_2)\frac{F_2^N}{F}+c_1\lambda^N\right)\lambda^{-j}& -c_1-c_2-c_1c_2\frac{F_0^2}{F}
\end{mymatrix}\,.
\end{eqnarray*}
\end{thm}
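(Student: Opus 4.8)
The plan is to obtain the Serre normal form by a change of module generators, starting from the prehomogenised algebra computed just above whose generators $\tilde\eta_1,\tilde\eta_2,\tilde\eta_3$ satisfy the displayed brackets over $\CC[F_1^2,F_2^N]$. The decisive observation is that every element of $\overline{\mf{sl}(V_{\psi_j})}^{\DD_N}_\Gamma$ is a trace-free $2\times2$ matrix over the field of automorphic functions, so for such an $M$ the operator $\textrm{ad}(M)$ has spectrum $\{0,\pm2\sqrt{-\det M}\}$. Hence the normal-form relations $[e_0,e_\pm]=\pm2e_\pm$ reduce to choosing a Cartan generator $e_0$ with $\det e_0=-1$ and taking $e_\pm$ in the $\pm2$ eigenspaces of $\textrm{ad}(e_0)$. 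This splits the proof into producing such an $e_0$, reading off the root vectors, and computing the single remaining structure constant.

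For the Cartan generator I would parametrise the degree-zero candidates as $e_0=\frac1F(a_1\tilde\eta_1+a_2\tilde\eta_2+a_3\tilde\eta_3)$ with $a_i\in\CC$ and $F=c_1F_1^2+c_2F_2^N$ vanishing on $\Gamma$. Computing the matrix determinant and eliminating $F_0$ through the syzygy $F_0^2=F_1^2-F_2^N$ gives
\[
\det(\textstyle\sum_i a_i\tilde\eta_i)=-a_3^2F_1^4+(a_3^2-2a_1a_2-a_2^2)F_1^2F_2^N-a_1^2F_2^{2N}.
\]
Imposing $\det(\sum_i a_i\tilde\eta_i)=-F^2$ and matching the coefficients of $F_1^4$, $F_1^2F_2^N$ and $F_2^{2N}$ forces $a_1=c_2$, $a_3=c_1$ and the quadratic $a_2^2+2c_2a_2-c_1^2-2c_1c_2=0$, whose relevant root is $a_2=c_1$. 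This recovers precisely the $e_0$ of the statement and certifies $\det e_0=-1$; note that the Cartan subalgebra is thereby dictated by $\Gamma$ through $(c_1,c_2)$, which is exactly the orbit dependence of the normal form announced above.

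Because $\textrm{ad}(e_0)$ is linear over the ring of automorphic functions and, since $\det e_0\neq0$, its eigenvalues $0,\pm2$ are distinct constants, the eigenspace projections are ring-coefficient polynomials in $\textrm{ad}(e_0)$; hence the algebra splits as a direct sum of its three eigen-submodules, each free of rank one over the principal ideal domain $\CC[F_k^{\nu_k}/F]$. Taking $e_\pm$ to be generators of the $\pm2$ eigenspaces (the simplest being the displayed $e_+$, which is indeed trace-free and of determinant zero, hence a genuine root vector) yields at once the module decomposition $\overline{\mf{sl}(V_{\psi_j})}^{\DD_N}_\Gamma=(e_0\oplus e_+\oplus e_-)\otimes\CC[F_k^{\nu_k}/F]$. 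It then remains only to evaluate $[e_+,e_-]$: it lies in the Cartan eigenspace, so equals $fe_0$ for a degree-zero automorphic function $f$, and a direct bracket computation simplified with the syzygy and $\sum_iF_i^{\nu_i}=0$ should return $f=\prod_{i=0}^2F_i^{\nu_i}/F$.

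The step I expect to be the main obstacle is not the eigenvalue analysis, which is forced and clean once $\det e_0=-1$ is imposed, but the uniformity across orbits together with the precise closed form of $e_-$. The whole point of the theorem is that generic and exceptional $\Gamma$ are treated by identical formulas, yet at an exceptional orbit $(c_1,c_2)$ degenerates so that $F$ collapses to a single $F_2^N$, $F_1^2$ or $F_0^2$; one must then check that the explicit $e_-$ remains a primitive module generator, that the base change from $(\tilde\eta_i)$ to $(e_0,e_+,e_-)$ stays invertible over $\CC[F_k^{\nu_k}/F]$, and that the identity $[e_+,e_-]=e_0\prod_iF_i^{\nu_i}/F$ survives these specialisations without spurious cancellation. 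Verifying these degenerate cases, and pinning down the complicated expression for $e_-$, is where the real computational work lies.
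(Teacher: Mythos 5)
Your proposal is correct and takes essentially the same route as the paper: the paper's proof exhibits an explicit base change $T$ whose first column is exactly your Cartan element $\tilde e_0=c_2\tilde\eta_1+c_1\tilde\eta_2+c_1\tilde\eta_3$ (equivalently $\det\tilde e_0=-F^2$), and the discussion immediately after the proof describes precisely your strategy --- first find $\tilde e_0$ with eigenvalues $\pm F$, units in the localised ring, then diagonalise $\textrm{ad}(\tilde e_0)$ to get $e_\pm$. The verifications you defer (that the displayed $e_\pm$ are primitive generators, that the base change is invertible over $\CC\left[\frac{F_k^{\nu_k}}{F}\right]$, and the value of $[e_+,e_-]$) are exactly what the paper settles by the tedious checks $\det T=\frac{1}{2}F^3$ and the stated commutators, and your worry about degeneration at exceptional orbits is unfounded since $F$ remains a unit there, so nothing essential is missing.
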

\begin{proof}
We prove the theorem by an explicit construction.
Consider the prehomogenised Lie algebra $p_{|2N|}\widehat{\mf{sl}(V_{\psi_j})}^{\DD_N^\flat}=\left(\tilde{\eta}_1\oplus\tilde{\eta}_2\oplus\tilde{\eta}_3\right)\CC[F_1^2,F_2^N]$. Define $(\tilde{e}_0,\tilde{e}_+,\tilde{e}_-)$ by \[(\tilde{e}_0,\tilde{e}_+,\tilde{e}_-)=(\tilde{\eta}_1,\tilde{\eta}_2,\tilde{\eta}_3)T\] where 
\[T=\frac{1}{2}\begin{mymatrix}
2c_2&-F_1^2&c_2^2F_0^2F_2^N+F^2\\
2c_1&F_2^N&-c_1^2F_0^2F_1^2-F^2\\
2c_1&F_2^N&F_2^N(c_1c_2F_0^2+(c_1+c_2)F)
\end{mymatrix}\,.\]
It is a straightforward though very tedious exercise to check that this transformation has determinant $\nicefrac{1}{2}F^3$, thus is invertible in the localised ring, and that
\begin{eqnarray*}
&&[\tilde{e}_0,\tilde{e}_\pm]=\pm 2F\tilde{e}_\pm 
\\&&[\tilde{e}_+,\tilde{e}_-]=FF_0^2F_1^2F_2^N \tilde{e}_0 \,.
\end{eqnarray*}
Hence the homogenised matrices $e_{\cdot}=h_\Gamma(\tilde{e}_{\cdot})$ satisfy the normal form described in the theorem.
\end{proof}
We observe that different choices of transformation groups have been made in previous works. Here we follow \cite{LS10} and allow invertible transformations $T$ on the generators over the ring of invariant forms, contrary to \cite{Bury2010}, where only linear transformations over $\CC$ are considered. 

To find the desired isomorphism $T$ one first looks for a matrix $\tilde{e}_0\in \left(\tilde{\eta}_1\oplus\tilde{\eta}_2\oplus\tilde{\eta}_3\right)\CC[F_1^2,F_2^N]$ which has eigenvalues $\pm\mu$ that are  units in the localised ring, that is, powers of $F$. This yields the first column of $T$. The other two columns are found by diagonalising $\textrm{ad}(\tilde{e}_0)$, i.e. solving $[\tilde{e}_0,\tilde{e}_\pm]=\pm 2 \mu \tilde{e}_\pm$ for $\tilde{e}_\pm\in\left(\tilde{\eta}_1\oplus\tilde{\eta}_2\oplus\tilde{\eta}_3\right)\CC[F_1^2,F_2^N]$.

In Table \ref{tb:invariant matrices with exceptional poles} we present the invariant matrices when $\Gamma$ is one of the exceptional orbits. In other words, we evaluate the matrices in Theorem $\ref{thm:main2}$ in $(c_1,c_2)=(1,-1)$ for poles at $\Gamma_0$, in $(c_1,c_2)=(1,0)$ for poles at $\Gamma_1$ and in $(c_1,c_2)=(0,1)$ for poles at $\Gamma_2$.

\begin{table}[h!]
\caption{Generators in normal form of Automorphic Lie Algebras with dihedral symmetry and exceptional poles}
\begin{center}
\scalebox{0.965}{
\begin{tabular}{lccc}
\label{tb:invariant matrices with exceptional poles}
 $\Gamma$ & $e_0$ & $e_+$ & $e_-$ \\
\hline
$\Gamma_0$ &
$\frac{1}{\lambda^N-1}\begin{mymatrix}\lambda^N+1&-2\lambda^j\\2\lambda^{N-j}&-\lambda^N-1\end{mymatrix}$ & 
$\frac{2(\lambda^N+1)\lambda^N}{(\lambda^N-1)^3}\begin{mymatrix}1&-\lambda^j\\\lambda^{-j}&-1\end{mymatrix}$ & 
$\frac{2(\lambda^N+1)\lambda^N}{(\lambda^N-1)^3}\begin{mymatrix}-1&\lambda^{j-N}\\-\lambda^{N-j}&1\end{mymatrix}$ \\
$\Gamma_1$ &
$\frac{1}{\lambda^N+1}\begin{mymatrix}\lambda^N-1&-2\lambda^j\\2\lambda^{N-j}&-\lambda^N+1\end{mymatrix}$ & 
$\frac{2(\lambda^N-1)\lambda^N}{(\lambda^N+1)^3}\begin{mymatrix}1&-\lambda^j\\\lambda^{-j}&-1\end{mymatrix}$ & 
$\frac{2(\lambda^N-1)\lambda^N}{(\lambda^N+1)^3}\begin{mymatrix}1&\lambda^{j-N}\\-\lambda^{N-j}&-1\end{mymatrix}$ \\
$\Gamma_2$ &
$\begin{mymatrix}0&\lambda^j\\\lambda^{-j}&0\end{mymatrix}$ & 
$\frac{\lambda^{2N}-1}{8\lambda^N}\begin{mymatrix}1&-\lambda^j\\\lambda^{-j}&-1\end{mymatrix}$ & 
$\frac{\lambda^{2N}-1}{8\lambda^N}\begin{mymatrix}1&\lambda^j\\-\lambda^{-j}&-1\end{mymatrix}$ \\
\end{tabular}
}
\end{center}
\end{table}

The three generators given in Theorem $\ref{thm:main2}$ can also be presented explicitly in $\lambda$ as
\begin{eqnarray*}
\fl
e_0=\frac{1}{\frac{c_1}{4}(\lambda^N+1)^2+c_2\lambda^N}\begin{mymatrix}
\frac{c_1}{4}(\lambda^{2N}-1) & \lambda^N(c_2\lambda^j+\nicefrac{c_1}{2}(\lambda^j+\lambda^{j-N}))\\
\lambda^N(c_2\lambda^{-j}+\nicefrac{c_1}{2}(\lambda^{-j}+\lambda^{N-j})) & \frac{c_1}{4}(1-\lambda^{2N})
\end{mymatrix}\\
\fl
e_+=\frac{\frac{1}{4}\lambda^N(\lambda^{2N}-1)}{2(\frac{c_1}{4}(\lambda^N+1)^2+c_2\lambda^N)^2}\begin{mymatrix}
1&-\lambda^j\\
\lambda^{-j}&-1
\end{mymatrix}\\
\fl
e_-=\frac{\frac{1}{4}\lambda^N(\lambda^{2N}-1)}{2(\frac{c_1}{4}(\lambda^N+1)^2+c_2\lambda^N)^2}\\
\fl 
\times
\begin{mymatrix}
c_1+c_2+\frac{\frac{c_1c_2}{4}(\lambda^N-1)^2}{\frac{c_1}{4}(\lambda^N+1)^2+c_2\lambda^N}
&c_2(c_1+c_2)\frac{\lambda^{N+j}}{\frac{c_1}{4}(\lambda^N+1)^2+c_2\lambda^N}+c_1\lambda^{-N+j}\\
-c_2(c_1+c_2)\frac{\lambda^{N-j}}{\frac{c_1}{4}(\lambda^N+1)^2+c_2\lambda^{N}}-c_1\lambda^{N-j}
&-c_1-c_2-\frac{\frac{c_1c_2}{4}(\lambda^N-1)^2}{\frac{c_1}{4}(\lambda^N+1)^2+c_2\lambda^N}
\end{mymatrix}.
\end{eqnarray*}

Theorem \ref{thm:main2} describes Automorphic Lie Algebras with dihedral symmetry, with poles at any orbit. Its proof exhibits the consequence of Lemma \ref{lem:only quotients of invariants} that one can compute all these algebra in one go. In particular, one does not need to distinguish between exceptional orbits and generic orbits. 

The resulting Lie algebras differ only in the bracket $[e_+,e_-]=\frac{F_0^{\nu_0}}{F}\frac{F_1^{\nu_1}}{F}\frac{F_2^{\nu_2}}{F} e_0$. Here one can discriminate between the case of generic and exceptional orbits since precisely one factor $\frac{F_i^{\nu_i}}{F}$ equals $1$ if and only if $\Gamma$ is an exceptional orbit. In other words, $\frac{F_0^{\nu_0}}{F}\frac{F_1^{\nu_1}}{F}\frac{F_2^{\nu_2}}{F}$ is a polynomial in $\frac{F_k^{\nu_k}}{F}$, $\Gamma_k\ne\Gamma$, of degree $2$ or $3$ if $\Gamma$ is exceptional or generic respectively. Notice that this degree equals the complex dimension of the quotient 
\[\bigslant{\overline{\mf{sl}(V_{\psi_j})}^{\DD_N}_{\Gamma}}{\left[\overline{\mf{sl}(V_{\psi_j})}^{\DD_N}_{\Gamma},\overline{\mf{sl}(V_{\psi_j})}^{\DD_N}_{\Gamma}\right]}\cong \bigslant{\CC\left[\frac{F_k^{\nu_k}}{F}\right]}{\CC\left[\frac{F_k^{\nu_k}}{F}\right]\frac{F_0^{\nu_0}}{F}\frac{F_1^{\nu_1}}{F}\frac{F_2^{\nu_2}}{F} }\]
in agreement with results in \cite{Bury2010}.

Another way to analyze the Automorphic Lie Algebras
is by considering their values at a particular point $\lambda$. This can be done in the Lie algebra, i.e. the strucure constants, or in the representation of the Lie algebra, i.e. the invariant matrices.
In the first case we have a three-dimensional Lie algebra, equivalent to $\mf{sl}_2(\CC)$ if $\lambda\in \overline{\CC}\setminus (\Gamma\cup\Gamma_0\cup\Gamma_1\cup\Gamma_2)$. When on the other hand $\lambda\in \Gamma_i\ne\Gamma$ we obtain the Lie algebra 
\[
[e_0,e_\pm]=\pm 2 e_\pm\,,\quad [e_+,e_-]=0 \,.
\]

Evaluating the invariant matrices in $\lambda\in \overline{\CC}\setminus (\Gamma\cup\Gamma_0\cup\Gamma_1\cup\Gamma_2)$ also results in $\mf{sl}_2(\CC)$. If, on the other hand, $\lambda\in\Gamma_i\ne\Gamma$ then two generators $e_\pm$ vanish and one is left with a one-dimensional, and in particular commutative, Lie algebra. For completeness, we mention that if $\lambda\in\Gamma$ then all generators of $\overline{\mf{sl}(2,\CC)}^{\DD_N}_\Gamma(\lambda)$ contain clearly singularities, by construction.

\subsection{Other base Lie algebras}

From the offset of this paper we have restricted our attention to simple Lie algebras $\mf{g}(V)$, leaving in the dihedral case only $\mf{sl}_2(\CC)$. It is however straightforward if not particularly interesting to consider other possibilities, hence we do this here for completeness.

Let $V$ be an irreducible $\DD_N$-module. We are interested in subspaces $\mf{g}(V)<\mf{gl}(V)$ which are both a Lie algebra and a $\DD_N$-module. 
The dimension of $V$ is $1$ or $2$. In the first case the action on $\mf{gl}(V)$ is trivial and the Automorphic Lie Algebras $\overline{\mf{g}(V)}^{G}_{\Gamma}=\mathcal{M}(\overline{\CC})^G_\Gamma$ are the rings of automorphic functions analytic outside $\Gamma$. 

Let $V$ now be a two-dimensional $\DD_N$-module affording the character $\psi_j$. Then the $\DD_N$-module $\mf{gl}(V)$ has character $\psi_j^2=\chi_1+\chi_2+\psi_{2j}$ and all $\DD_N$-submodules $\mf{g}(V)<\mf{gl}(V)$ are given by a subset of these characters. Now we add the condition that $\mf{g}(V)$ be a Lie algebra. To this end we look at the matrices affording the available characters $\chi_1+\chi_2+\psi_{2j}$ as in Example \ref{ex:sl2 invariants}, where we noticed that the bracket of the $\psi_{2j}$ summand equals the $\chi_2$ summand. Any Lie algebra containing the first therefore contains the latter. This turns out to be the only restriction on the submodule $\mf{g}(V)$ coming from the requirement that it be a Lie algebra. Lie algebras of all dimensions $\le 4$ are available. The only noncommutative ones are $\mf{sl}(V)$, affording $\chi_2+\psi_{2j}$, and $\mf{gl}(V)$. 
Since $\overline{\mf{gl}(V)}^G_\Gamma=\mathcal{M}(\overline{\CC})^G_\Gamma\textrm{Id}\oplus\overline{\mf{sl}(V)}^G_\Gamma$ considering $\mf{sl}(V)$ is rather natural.

\subsection{Matrices of invariants}
\label{sec:moi}
We conclude this section with yet another representation of the algebra. 
Invariant matrices act on invariant vectors by multiplication. The description of the invariant matrices in terms of this action yields greatly simplified matrices, which we will call \emph{matrices of invariants}, while preserving the structure constants of the Lie algebra. The entries of these matrices are indeed invariant, but the matrices are not invariant under the group action. 


There are different ways to compute matrices of invariants. Here, for example, we choose two vectors, $v_1$ and $v_2$, satisfying three conditions: they are invariant, independent eigenvectors of the matrices in the Cartan subalgebra of the Automorphic Lie Algebras (in this case only $e_0$) and the difference of their degrees is a multiple of $|G|=2N$. This will ensure that the transformed matrices have entries in the ring $p_{|G|}\CC[X,Y]^{G^\flat}=\CC[F_1^2,F_2^N]$ and the matrices in the Cartan subalgebra will be diagonal. Alternatively, one could look at the Molien series for the equivariant vectors and consider the lowest degree at which there are two different equivariant vectors, as a guiding principle.

Let us choose 
\[v_1=\begin{mymatrix}X^j\\Y^j\end{mymatrix},
\qquad v_2=(c_1+c_2)F_1^2F_2^N\begin{mymatrix}X^j\\Y^j\end{mymatrix}-FF_1 F_2^j\begin{mymatrix}Y^{N-j}\\X^{N-j}\end{mymatrix}.\]
resulting in
\[
e_0\sim\begin{mymatrix}1&0\\0&-1\end{mymatrix},\qquad
e_+\sim \begin{mymatrix}0&\frac{F_0^{\nu_0}F_1^{\nu_1}F_2^{\nu_2}}{F^{3}}\\0&0\end{mymatrix},\qquad
e_-\sim \begin{mymatrix}0&0\\1&0\end{mymatrix}.
\]

This approach is especially useful when investigating Automorphic Lie Algebras with  $\TT$, $\OO$ or $\YY$ symmetry, because then the invariant matrices do not fit a page
and in the worst case it takes a top of the range workstation months
to calculate the structure constants. However, if one can construct the matrices of invariants first, it is easy to find the structure constants.

\section{Conclusions}
\label{sec:conc}
Among Automorphic Lie Algebras those with dihedral symmetry have been studied the most (e.g.  \cite{LM04iop,Mikhailov81,Wa09,konstantinou2013darboux,MPW2014}) and, together with icosahedral symmetry, dihedral symmetry seems the most relevant in physical systems. From a mathematical point of view,
$\DD_N$ is the only non-abelian group in Klein's classification \cite{Klein56,Klein93}, whose order depends on $N$. On the classification side, many results are available, starting with the seminal results in \cite{Mikhailov81}; the first attempt towards a systematic classification can be found in \cite{LM04iop}, further development in \cite{LM05comm,LS10,Bury2010}. 


In this paper we provide a complete classification of 
Automorphic Lie Algebras with dihedral symmetry, with poles at any orbit (see Theorem \ref{thm:main2}). 
We find that  there exists a \emph{normal form} for the algebras given by 
\begin{eqnarray*}
&&[e_0,e_\pm]=\pm 2e_\pm 
\\&&[e_+,e_-]=\frac{F_0^{\nu_0}}{F}\frac{F_1^{\nu_1}}{F}\frac{F_2^{\nu_2}}{F} e_0
\end{eqnarray*}
where $\{e_0\,,\,e_\pm\}$ is a set of generators, $F=c_1F_1^2+c_2F_2^N$ is a nonzero invariant form vanishing on $\Gamma$, $F_i$ are the relative automorphic functions given by (\ref{eq:Fs}), after identification of the Riemann sphere with the complex projective line $\CC P^1$ consisting of quotients $\frac{X}{Y}$ of two complex variables by setting $\lambda=\frac{X}{Y}$, and where $\nu_i$ are given in Definition \ref{def:form of orbit}.
It is worth pointing out that the symmetric Lie algebras differ only in the bracket $[e_+,e_-]=\frac{F_0^{\nu_0}}{F}\frac{F_1^{\nu_1}}{F}\frac{F_2^{\nu_2}}{F}e_0 $. Thus one can discriminate between the case of generic and exceptional orbits since precisely one factor $\frac{F_i^{\nu_i}}{F}$ equals $1$ if and only if $\Gamma$ is an exceptional orbit. \\
Theorem \ref{thm:main2} is an important step towards the complete classification of Automorphic Lie Algebras, as the simplifying assumption made in \cite{LS10} to use the same matrices representing the reduction group to act on the spectral parameter as well as on the base Lie algebra is dropped. This simplifying assumption is no longer valid when considering higher dimensional Lie algebras, so the result is also a step forward in the classification beyond $\mathfrak{sl}_2(\CC)$.
Moreover, there is no need to distinguish between \emph{generic} and \emph{exceptional} orbits, as they can be all treated in the same way.
The classification is thus \emph{uniform}, both in the choice of representations and in the choice of the orbits.\\ 
The  introduction of a normal form is also a fundamental step, even more so in higher dimensional cases, as it allows to consider whether algebras are isomorphic or not. 
Preliminary results show that having a normal form makes the problem feasible. 
Similarly, the classification of dihedral Automorphic Lie Algebras associated to \emph{reducible} $\DD_N$-representations and any simple (or semisimple) Lie algebra $\mathfrak{g}$ seems within reach.

We also introduce the concept of \emph{matrices of invariants} (see Section \ref{sec:moi}); they describe the (multiplicative) action of invariant matrices on invariant vectors. The description of the invariant matrices in terms of this action yields greatly simplified matrices, while preserving the structure constants of the Lie algebra. 
We believe that the matrices of invariants will play a fundamental role in the classification of higher dimensional Automorphic Lie Algebras. 

\appendix
\section{Invariant vectors}
\label{sec:invariant_vectors}

\begin{thm}[Invariant vectors]
\label{thm:invariant vectors}
Let $\DD_N$ act on $(X,Y)$ with character $\psi_1$. In the basis corresponding to (\ref{eq:standard matrices}), one finds the relative invariant forms $F_i$ as in (\ref{eq:Fs}) and the space of invariant vectors
\[
\big(V_{\psi_j}\otimes \CC[X,Y]\big)^{\DD_N}=\left( \begin{mymatrix} X^{j}\\Y^{j} \end{mymatrix}\oplus \begin{mymatrix}Y^{N-j}\\X^{N-j}\end{mymatrix}\right) \otimes\CC[F_1,F_2]
\]
where the sum is direct over the ring $\CC[F_1,F_2]$.
\end{thm}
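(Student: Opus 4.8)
The plan is to reduce the computation of the invariant vectors to a question about a single weight space of $\CC[X,Y]$ under the cyclic subgroup $\langle r\rangle$, and then to resolve that by a direct generation-plus-independence argument.

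First I would rewrite invariance as equivariance. An element of $V_{\psi_j}\otimes\CC[X,Y]$ is a column $\bar v=(P,Q)^{\top}$ of polynomials, and its $\DD_N$-invariance is equivalent to $\bar v(\sigma_g(X,Y))=\tau_g\,\bar v(X,Y)$ for the generators $g=r,s$, where $\sigma=\psi_1$ and $\tau=\psi_j$ are given by (\ref{eq:standard matrices}). Since $\sigma_r=\mathrm{diag}(\omega,\omega^{-1})$ and $\tau_r=\mathrm{diag}(\omega^{j},\omega^{-j})$ are diagonal, the relation for $g=r$ says exactly that $P$ lies in the weight-$j$ eigenspace $\CC[X,Y]_j:=\{P\;:\;P(\omega X,\omega^{-1}Y)=\omega^{j}P\}$, i.e. $P$ is a combination of monomials $X^aY^b$ with $a-b\equiv j\pmod N$, while $Q$ has weight $-j$. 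The relation for $g=s$, which swaps $X,Y$ and interchanges the two components, forces $Q(X,Y)=P(Y,X)$. Hence $\bar v\mapsto P$ is a bijection from the invariant vectors onto $\CC[X,Y]_j$, with inverse $P\mapsto(P,P^{s})^{\top}$ where $P^{s}(X,Y)=P(Y,X)$; it is moreover $\CC[F_1,F_2]$-linear, because $F_1,F_2$ are $s$-symmetric and $\DD_N$-invariant, so multiplication by them preserves the form $(P,P^{s})^{\top}$. Under this bijection the two claimed generators correspond to $X^{j}$ and $Y^{N-j}$, which visibly lie in $\CC[X,Y]_j$.

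It then remains to prove that $\CC[X,Y]_j$ is the \emph{free} $\CC[F_1,F_2]$-module on $\{X^{j},Y^{N-j}\}$ (recall from Example \ref{ex:invariant forms} that $F_1,F_2$ are algebraically independent, so $\CC[F_1,F_2]$ is a genuine polynomial ring). Freeness is easy: from $pX^{j}+qY^{N-j}=0$ with $p,q\in\CC[F_1,F_2]$, applying $s$ gives $pY^{j}+qX^{N-j}=0$, and eliminating $q$ yields $p(X^{N}-Y^{N})=0$, whence $p=q=0$; equivalently, the two invariant vectors have nonzero determinant $X^{N}-Y^{N}$. The substantive step, and the one I expect to be the main obstacle, is \emph{generation}: every monomial $X^{a}Y^{b}$ with $a-b\equiv j\pmod N$ must be shown to lie in $M_0:=\CC[F_1,F_2]X^{j}+\CC[F_1,F_2]Y^{N-j}$. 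I would argue by induction on the total degree $d=a+b$. If $a,b\ge1$ then $X^{a}Y^{b}=F_2\,X^{a-1}Y^{b-1}$ reduces to degree $d-2$. Otherwise the monomial is a pure power; by its weight it is $X^{a}$ with $a\equiv j$ (so $a\ge j$) or $Y^{b}$ with $b\equiv N-j$ (so $b\ge N-j$). For $a=j$ (resp. $b=N-j$) it is a generator, and for $a=j+Nk$ with $k\ge1$ I would use $X^{N}=2F_1-Y^{N}$ to write $X^{a}=2F_1X^{a-N}-F_2\,X^{a-N-1}Y^{N-1}$, both summands having strictly smaller degree and the same weight $j$, so the induction closes (the $Y$-case is symmetric).

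Assembling these pieces gives $\CC[X,Y]_j=M_0$ as a free module, and transporting back through $P\mapsto(P,P^{s})^{\top}$ yields the stated decomposition with generators $(X^{j},Y^{j})^{\top}$ and $(Y^{N-j},X^{N-j})^{\top}$, the sum being direct over $\CC[F_1,F_2]$. As an independent check one could instead evaluate the Molien series of Theorem \ref{Molien, thm}, namely $\tfrac{1}{2N}\sum_{\ell=0}^{N-1}\tfrac{\omega^{j\ell}+\omega^{-j\ell}}{(1-\omega^{\ell}t)(1-\omega^{-\ell}t)}$, by a roots-of-unity filter; it collapses to $\sum_{a-b\equiv j}t^{a+b}$, the Poincaré series of $\CC[X,Y]_j$, which together with freeness and the containment $M_0\subseteq\CC[X,Y]_j$ forces equality degree by degree.
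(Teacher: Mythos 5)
Your proof is correct, and it takes a genuinely different route from the paper's own proof in \ref{sec:invariant_vectors}. The paper works in the opposite order: it first computes the $\zn{N}$-invariant vectors, obtaining a module over $\CC[X,Y]^{\zn{N}}=(1\oplus F_0)\otimes\CC[F_1,F_2]$ with four generators $X^je_1$, $Y^je_2$, $Y^{N-j}e_1$, $X^{N-j}e_2$, then eliminates the redundant $F_0$-multiples via $F_0X^j=F_1X^j-F_2^jY^{N-j}$, proves independence over $\CC[F_1,F_2]$ (multiplying by $Y^j$ and noting $gF_0$ is the only non-$s$-invariant term), and only at the end symmetrizes with the Reynolds projector $\frac{1}{2}(1+s)$ to land on the two stated generators. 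You instead fold in the reflection \emph{first}: the bijection $\bar v\mapsto P$ with $Q=P^{s}$ identifies the invariant vectors with the single weight space $\CC[X,Y]_j$, after which the rotation condition is handled by a self-contained induction on degree using $XY=F_2$ and $X^N=2F_1-Y^N$, with freeness from the determinant $X^N-Y^N=2F_0\ne0$. Your organization buys cleaner bookkeeping --- no intermediate four-generator module and no $F_0$-redundancy step --- and it makes the generation step fully explicit, whereas the paper's appendix asserts (correctly, but without the induction spelled out) that it suffices to inspect $X^de_i$ and $Y^de_i$ with $d<N$ modulo the $\zn{N}$-invariant forms; conversely, the paper's route records the $\zn{N}$-invariant vector module as a byproduct, which mirrors the structure of Tables \ref{tb:invariants, N odd} and \ref{tb:invariants, N even} and is reusable for the one-dimensional characters. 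Your Molien-series cross-check is sound and in the spirit of Theorem \ref{Molien, thm} and Example \ref{ex:invariant forms}, though redundant once generation and freeness are proved. If you write this up, make the well-foundedness of the induction explicit: $1\le j<\nicefrac{N}{2}$ guarantees $a-N\ge j\ge1$ in the pure-power step (and $b-N\ge N-j\ge1$ symmetrically), and there are no weight-$j$ monomials of degree below $\min(j,N-j)$, so the recursion terminates at the two generators.
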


\begin{proof}
An object is invariant under a group action if it is invariant under the action of all generators of a group. To find all $\DD_N=\langle r, s \rangle$- invariant vectors we first look for the $\langle r \rangle=\zn{N}$- invariant vectors and then average over the action of $s$ to obtain all dihedral invariant vectors.

The space of invariant vectors is a module over the ring of invariant forms. When searching for $\zn{N}$-invariant vectors, one can therefore look for invariants modulo powers of the $\zn{N}$-invariant forms $XY$, $X^N$ and $Y^N$. Moreover, we represent $\zn{N}$ by diagonal matrices (which is possible because $\zn{N}$ is abelian). Hence, if $\zn{N}$ acts on $V=\langle e_1, e_2\rangle$ then $\zn{N} e_i\subset \langle e_i \rangle$. Therefore, one only needs to investigate the vectors $X^d e_i$ and $Y^{d} e_i$ for $d \in\{0,\ldots,N-1\}$ and $i\in\{1,2\}$.

Let $\sigma$ be the action on $(X,Y)$ and $\tau$ the action on the vectors. We use the basis in which $\sigma_r= \begin{mymatrix}\omega&0\\0&\omega^{N-1}\end{mymatrix}$ and $\tau_r= \begin{mymatrix}\omega^{{j}}&0\\0&\omega^{N-{j}}\end{mymatrix}$, 
so that \[r X^d e_1 = \omega^{{j}} \omega^{-d} X^d e_1.\] We want to solve $\omega^{{j}} \omega^{-d}=1$, i.e. $d-j\in N\ZZ$. Hence $d\in(j+N\ZZ)\cap\{0,\ldots,N-1\}=j$. That is, $X^{j} e_1$ is invariant under the action of $\langle r \rangle\cong \zn{N}$.

Let us now consider the next one, $rY^{d} e_1=\omega^{j}\omega^{d}Y^{d} e_1$. We solve $\omega^{{j}} \omega^{d}=1$ i.e. $d+j \in N\ZZ$. This implies $d\in (N-j+N\ZZ)\cap \{0, \ldots, N-1\}=N-j$, thus $rY^{N-j} e_1=Y^{N-j} e_1$ is invariant.

Similarly one finds the invariant vectors $Y^{j} e_2$ and $X^{N-j} e_2$,
resulting in the space 
\begin{eqnarray*} 
\fl
\left(\CC^2\otimes\CC[X,Y]\right)^{\zn{N}}
=\left( \begin{mymatrix} X^{j}\\0 \end{mymatrix} + \begin{mymatrix} 0\\Y^{j} \end{mymatrix} + \begin{mymatrix} Y^{N-j}\\0 \end{mymatrix} + \begin{mymatrix}0\\X^{N-j}\end{mymatrix}\right) \otimes\CC[X,Y]^{\zn{N}}\,.
\end{eqnarray*}
If we use the fact $\CC[X,Y]^{\zn{N}}=(1\oplus F_0)\otimes\CC[F_1, F_2]$ this space is generated as a $\CC[F_1, F_2]$ module by the vectors
{\small \begin{eqnarray*}
\fl
\begin{mymatrix} X^{j}\\0 \end{mymatrix}\!, \begin{mymatrix} 0\\Y^{j} \end{mymatrix}\!, \begin{mymatrix} Y^{N-j}\\0 \end{mymatrix}\!, \begin{mymatrix}0\\X^{N-j}\end{mymatrix}\!,\,
F_0\begin{mymatrix} X^{j}\\0 \end{mymatrix}\!, F_0\begin{mymatrix} 0\\Y^{j} \end{mymatrix}\!, F_0\begin{mymatrix} Y^{N-j}\\0 \end{mymatrix}\!, F_0\begin{mymatrix}0\\X^{N-j}\end{mymatrix}\!.
\end{eqnarray*}
}

It turns out the above vectors are dependent over the ring $\CC[F_1, F_2]$. One finds that 
\[F_0\begin{mymatrix} X^{j}\\0 \end{mymatrix}=f(F_1,F_2)\begin{mymatrix} X^{j}\\0 \end{mymatrix}+g(F_1,F_2)\begin{mymatrix} Y^{N-j}\\0 \end{mymatrix}\] 
if $f(F_1,F_2)=F_1$ and $g(F_1,F_2)=-F_2^{j}$, hence this vector is redundant. Similarly, the other vectors with a factor $F_0$ can be expressed in terms of the vectors without this factor.

The remaining vectors between the brackets are independent over the ring $\CC[F_1, F_2]$. Indeed, let $f,g\in\CC[F_1, F_2]$ and consider the equation 
\[fX^{j}+gY^{N-j}=0\,.\] 
If the equation is multiplied by $Y^{j}$ we find 
\[fF_2^{j} +gY^{N}=fF_2^{j} +g (F_1-F_0)=0\,.\] Now one can use the fact that all terms are invariant under the action of $s$ except for $gF_0$ to see that $g=0$, and hence $f=0$.  Similarly $fY^{j}+gX^{N-j}=0$ implies $f=g=0$. Thus we have a direct sum
\[
\fl
\left(\CC^2\otimes\CC[X,Y]\right)^{\zn{N}}
=\left( \begin{mymatrix} X^{j}\\0 \end{mymatrix}\oplus \begin{mymatrix} 0\\Y^{j} \end{mymatrix}\oplus \begin{mymatrix} Y^{N-j}\\0 \end{mymatrix}\oplus \begin{mymatrix}0\\X^{N-j}\end{mymatrix}\right) \otimes\CC[F_1,F_2]\,.
 \] 

To obtain $\DD_N$-invariants we apply the projection $\frac{1}{2}(1+s)$.
Observe that the $\DD_N$-invariant polynomials move through this operator so that one only needs to compute 
\[\frac{1}{2}(1+s)\begin{mymatrix} X^{j}\\0 \end{mymatrix}
=\frac{1}{2}\begin{mymatrix} X^{j}\\Y^{j} \end{mymatrix}
=\frac{1}{2}(1+s)\begin{mymatrix} 0\\Y^{j} \end{mymatrix}\]
and 
\[\frac{1}{2}(1+s)\begin{mymatrix} Y^{N-j}\\0 \end{mymatrix}
=\frac{1}{2}\begin{mymatrix} Y^{N-j}\\X^{N-j} \end{mymatrix}
=\frac{1}{2}(1+s)\begin{mymatrix} 0\\X^{N-j} \end{mymatrix}
\]
These two vectors are independent over the ring by the previous reasoning. 

\end{proof}

\textbf{Remark.}
If one allows $\sigma$ to be non-faithful, several more cases appear. However, they are not more interesting than what we have seen so far, which is why we decided not to include this in the theorem. In words, it is as follows.
If the character of $\sigma$ is $\psi_{j'}$ and $\bigslant{\DD_N}{\ker \tau}$ is a subgroup of $\bigslant{\DD_N}{\ker \sigma}$ then everything is the same as above except that $N$ will be replaced by $N'=\frac{N}{\textrm{gcd}(N,j')}$. If on the other hand $\bigslant{\DD_N}{\ker \tau}$ is not a subgroup of $\bigslant{\DD_N}{\ker \sigma}$, then the only invariant vector is the zero vector.


\section*{References}
\def\cprime{$'$}

\end{document}